\newtheorem{lemma}{Lemma}
\newtheorem{theorem}{Theorem}
\newtheorem{cor}{Corollary}
\begin{document}
%
\title{Connecting Multiple-unicast and Network Error Correction: Reduction and Unachievability}

\author{\IEEEauthorblockN{Wentao Huang}
\IEEEauthorblockA{
California Institute of Technology
}
\and
\IEEEauthorblockN{Michael Langberg}
\IEEEauthorblockA{
University at Buffalo, SUNY
}
\and
\IEEEauthorblockN{Joerg Kliewer}
\IEEEauthorblockA{
New Jersey Institute of Technology
}}


%


\maketitle



\begin{abstract}
We show that solving a multiple-unicast network coding problem can be reduced to solving a single-unicast network error correction problem, where an adversary may jam at most a single edge in the network. Specifically, we present an efficient reduction that maps a multiple-unicast network coding instance to a network error correction instance while preserving feasibility. The reduction holds for both the zero probability of error model and the vanishing probability of error model. Previous reductions are restricted to the zero-error case. As an application of the reduction, we present a constructive example showing that the single-unicast network error correction capacity may not be achievable, a result of separate interest. \end{abstract}

%
\IEEEpeerreviewmaketitle

\allowdisplaybreaks[2]
\section{Introduction}
\footnotetext[0]{This work has been supported in part by NSF grant CCF-1440014, CCF-1440001, CCF-1439465, and CCF-1321129.}
Consider the problem that a source wishes to reliably communicate to a terminal over a network with point-to-point noiseless channels, in the presence of an adversary. The adversary is characterized by a collection $\mathcal{A}$ of subsets of channels, so that it may choose an arbitrary $A \in \mathcal{A}$ and controls the channels in set $A$. Under the assumption that 1) all channels have uniform capacity and 2) $\mathcal{A}$ is the collection of all subsets containing $z$ channels, Yeung and Cai  \cite{Yeung:2006ut} show that the cut-set bound characterizes the network error correction capacity. Efficient capacity-achieving network error correction codes under this setting are proposed in \cite{Zhang:2008wf,Jaggi:2008dq,Kschischang:2008jj,Koetter:2008jt}.

In the settings that either channel capacities are not uniform or $\mathcal{A}$ is arbitrary, determining the network capacity remains an open problem. For both cases it is shown that linear codes are not sufficient to achieve capacity \cite{Kim:2011ec, Kosut:2009ts}. Capacity bounds and achievable strategies for network error correction with unequal channel capacities are studies in \cite{Kim:2011ec}. Achievable strategies for network error correction with non-uniform $\mathcal{A}$, i.e., $\mathcal{A}$ includes subsets of different sizes, are studied in \cite{Kosut:2010ti, DaWang:2010wv, Che:2013vy}. 

The single-source single-terminal network error correction problem with arbitrary $\mathcal{A}$ is shown in a previous work of the authors \cite{Huang:2014vg} to be at least as hard as the multiple-unicast network coding problem (without adversarial errors), using the following reduction technique. For a general multiple-unicast network coding problem $\mathcal{I}$, a corresponding network error correction problem $\mathcal{I}_c$ can be constructed, so that a rate is feasible with zero error in $\mathcal{I}$ if and only if a corresponding rate is feasible with zero error in $\mathcal{I}_c$. Therefore, the problem of determining the zero-error feasibility of a rate in $\mathcal{I}$ is reduced to the problem of determining the zero-error feasibility of a rate in $\mathcal{I}_c$. Under the model that  a vanishing probability of error is allowed, the connection between the feasibility in $\mathcal{I}$ and $\mathcal{I}_c$ is also studied in  \cite{Huang:2014vg}. However, in this case the result therein is weaker, and has a gap so that the ``if and only if'' connection between $\mathcal{I}$ and $\mathcal{I}_c$ is broken. Hence for this case the reduction from $\mathcal{I}$ to $\mathcal{I}_c$ is \emph{not} established. 

In this paper, we close this gap and complete the reduction from multiple-unicast to network error correction under the vanishing-error model.  For a general multiple-unicast network coding problem $\mathcal{I}$, we show that a corresponding network error correction problem $\mathcal{I}_c$ can be constructed, so that a rate is feasible with vanishing error in $\mathcal{I}$ if and only if a corresponding rate is feasible with vanishing error in $\mathcal{I}_c$. We construct $\mathcal{I}_c$ in the same way as in \cite{Huang:2014vg}. 
However, compared to the (implicit) information-theoretic approach used in \cite{Huang:2014vg}, we present a way to explicitly construct the network code for $\mathcal{I}$ from the network code for $\mathcal{I}_c$. 
Furthermore, the new approach enables a stronger result, simplifies the proofs and generalizes the result of the zero-error model as a special case.


As there are connections between $\mathcal{I}$ and $\mathcal{I}_c$ for both zero-error feasibility and vanishing-error feasibility, it is natural to ask if the connection extends to the case of asymptotic feasibility. We answer this question negatively by constructing a counter-example. By applying our analysis to this example, we further show that the (single-source single-terminal) network error correction capacity is not achievable in general, which is a result of separate interest. Previous works \cite{Dougherty:2006cg, Rai:2012iq} have studied the unachievability of capacity for multiple-unicast networks and sum-networks, respectively.
\section{Models}
\subsection{Multiple-unicast Network Coding}
A network is a directed graph $\mathcal{G}=(\mathcal{V},\mathcal{E})$, where vertices represent network nodes and edges  represent  channels. Each edge $e \in \mathcal{E}$ has a capacity $c_e$, which is the number of bits 
 that can be transmitted on $e$ in one transmission. An instance  $\mathcal{I}=(\mathcal{G}, \mathcal{S}, \mathcal{T}, B)$ of the  \emph{multiple-unicast network coding problem},  includes a network $\mathcal{G}$, a set of source nodes $\mathcal{S} \subset \mathcal{V}$, a set of terminal nodes $\mathcal{T} \subset \mathcal{V}$ and an $|\mathcal{S}|$ by $|\mathcal{T}|$ requirement matrix $B$. The $(i,j)$-th entry of $B$ equals 1 if terminal $j$ requires the information from source $i$ and equals 0 otherwise.  $B$ is assumed to be a permutation matrix so that each source is paired with a single terminal. Denote by $s(t)$ the source required by terminal $t$. Denote $[n] \triangleq \{1,.., n \}$, and each source $s \in \mathcal{S}$ is associated with an independent  message, represented by a random variable $M_s$ uniformly distributed over $[2^{nR_s}]$. 
 A \emph{network code} of length $n$ is a set of encoding functions $\phi_e$ for every $e \in \mathcal{E}$ and a set of decoding functions $\phi_t$ for each $t \in \mathcal{T}$. For each $e =(u,v) $, the encoding function $\phi_e$ is a function taking as input the signals received from the incoming edges of node $u$, as well as the random variable $M_u$ if $u \in \mathcal{S}$. $\phi_e$ evaluates to a value in $[2^{n c_e}]$, which is the signal transmitted on $e$. For each $t \in \mathcal{T}$, the decoding function $\phi_t$ maps the tuple of signals received from the incoming edges of $t$, to an estimated message $\hat{M}_{s(t)}$ with values in  $[2^{nR_{s(t)}}]$.

A network code $\{\phi_e, \phi_t \}_{e \in \mathcal{E}, t \in \mathcal{T}}$ is said to \emph{satisfy} a terminal $t$ under transmission $(m_s, s\in\mathcal{S})$ if $\hat{M}_{s(t)} = m_{s(t)}$ when $(M_s, s \in \mathcal{S}) = (m_s, s \in \mathcal{S})$. A network code is said to satisfy the multiple-unicast network coding problem $\mathcal{I}$ with error probability $\epsilon$ if the probability that all $t \in \mathcal{T}$ are simultaneously  satisfied is at least $1-\epsilon$. The probability is taken over the joint distribution on $(M_s, s\in\mathcal{S})$. Formally, the network code satisfies $\mathcal{I}$ with error probability $\epsilon$ if
\begin{align*}
\Pr_{(M_s, s \in \mathcal{S})} \left\{ \bigcap_{t \in \mathcal{T}} t \text{ is satisfied under }(M_s, s \in \mathcal{S})  \right\} \ge 1-\epsilon.
\end{align*}

For an instance $\mathcal{I}$ of the multiple-unicast network coding problem, rate $R$ is said to be \emph{feasible} if $R_s = R$, $\forall s\in\mathcal{S}$, and for any $\epsilon>0$, there exists a network code that satisfies $\mathcal{I}$ with error probability at most $\epsilon$. Rate $R$ is said to be \emph{feasible with zero error} if $R_s = R$, $\forall s\in\mathcal{S}$ and there exists a network code that satisfies $\mathcal{I}$ with zero error probability. Rate $R$ is said to be \emph{asymptotically feasible} if  for any $\delta > 0$, rate $(1-\delta)R$ is feasible. 
The model assumes that all sources transmit information at equal rate. There is no loss of generality in this assumption as a varying rate source $s$ can be modeled by several equal rate sources co-located at $s$.

\subsection{Single-source Single-terminal Network Error Correction}
An instance $\mathcal{I}_c = (\mathcal{G}, s, t, \mathcal{A})$ of the \emph{single-source single-terminal network error correction problem} includes a network $\mathcal{G}$, a source node $s$, a terminal node $t$ and a collection of subsets of channels $\mathcal{A} \subset 2^\mathcal{E}$ susceptible to errors. An error occurs in a channel if the output of the channel is different from the input. More precisely, the output of a channel $e$ is the bitwise xor of the input signal and an error signal $r_e$. We say there is an error in channel $e$ if $r_e$ is not the zero vector. For a subset $A \in \mathcal{A}$ of channels,  an $A$-error is said to occur if an error occurs in every channel in $A$. Since there is only a single source and a single terminal, in this problem we suppress the subscript and denote by $M$ the random message of the source, by $R$ the rate of $M$, and by $\hat{M}$ the output of the decoder at $t$.

Denote by $\mathcal{R}_{\mathcal{A}}$ the set of \emph{error patterns} $\bm{r} = \{r_e\}_{e \in \mathcal{E}}$ that correspond to an $A$-error, for any $A \in \mathcal{A}$. A network code $\{\phi_e, \phi_t\}_{e \in \mathcal{E} }$ is said to \emph{satisfy} $\mathcal{I}_c$ under transmission $m$ if $\hat{M} = m$ when $M = m$, regardless of the occurrence of any error pattern $\bm{r} \in \mathcal{R}_{\mathcal{A}}$. A network code is said to satisfy problem $\mathcal{I}_c$ with error probability $\epsilon$ if the probability that $\mathcal{I}_c$ is  satisfied is at least $1-\epsilon$. The probability is taken over the  distribution on the source message $M$. Note that our model targets the worst-case scenario in the sense that if $M=m$ is transmitted and $m$ is satisfied by the network code, then correct decoding is guaranteed regardless of the error pattern.

For a single-source single-terminal network error correction problem $\mathcal{I}_c$, rate $R$ is said to be \emph{feasible} if for any $\epsilon>0$, there exists a network code that satisfies $\mathcal{I}_c$ with error probability at most $\epsilon$. Rate $R$ is said to be \emph{feasible with zero error} if there exists a network code that satisfies $\mathcal{I}_c$ with zero error probability. Rate $R$ is said to be \emph{asymptotically feasible} if  for any $\delta > 0$, rate $(1-\delta)R$ is feasible. The capacity of $\mathcal{I}_c$ is the supremum over all rates that are asymptotically feasible.


\section{Reduction from Multiple-unicast to Network Error Correction}\label{sec:red}
\begin{figure}[h!]
  \begin{center}
      \includegraphics[width=0.27\textwidth]{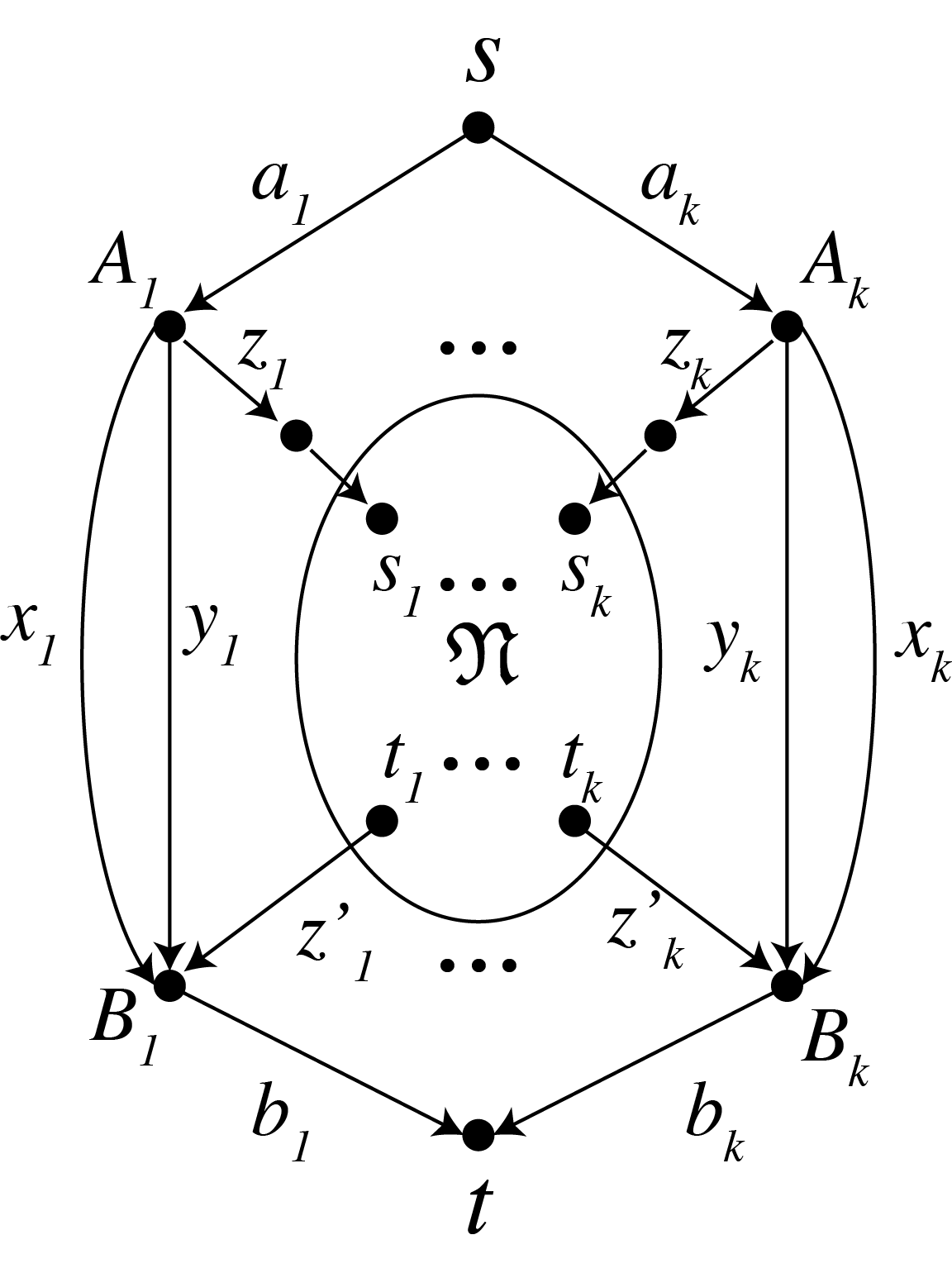}
  \caption{In the single-unicast network error correction problem $\mathcal{I}_c$, the source $s$ wants to communicate with the terminal $t$. $\mathcal{N}$ is a general network with point-to-point noiseless channels. All edges outside $\mathcal{N}$ (i.e., edges for which at least one of its end-points does not belong to $\mathcal{N}$) have unit capacity. There is at most one error in this network, and this error may occur at any edge except $\{ a_i, b_i, 1 \le i \le k \}$. Namely, $\mathcal{A}$ includes all singleton sets of a single edge in the network except $\{a_i\}$ and $\{b_i\}$, $i=1,...,k$. Note that there are $k$ parallel branches in total but only the first and the $k$-th branches are drawn explicitly. The multiple-unicast network coding problem $\mathcal{I}$ is defined on the network $\mathcal{N}$, where the $k$ source-destination pairs are $(s_i,t_i), i=1,...,k$, and all channels are error-free.}\label{zeroerr}
           \end{center}
\end{figure}
The following theorem is the main result of this section.
\begin{theorem}\label{epsilonerrk2}
Given any multiple-unicast network coding problem $\mathcal{I}$ with source-destination pairs $\{ (s_i, t_i), i=1,...,k \}$, a corresponding single-source single-terminal network error correction problem $\mathcal{I}_c=(\mathcal{G},s,t,\mathcal{A})$  in which $\mathcal{A}$ includes sets with at most a single edge can be constructed as specified in Figure \ref{zeroerr},  such that rate $k$ is feasible in $\mathcal{I}_c$ if and only if unit rate is feasible in $\mathcal{I}$.
\end{theorem}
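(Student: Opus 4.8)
My plan is to prove the two directions separately, keeping the construction of $\mathcal{I}_c$ from $\mathcal{I}$ fixed as the gadget of Figure~\ref{zeroerr}: a super-source $s$ and super-terminal $t$ are attached to $\mathcal{N}$ through $k$ parallel branches, branch $i$ meeting $\mathcal{N}$ via the unit-capacity unjammable edges $a_i$ (into $s_i$) and $b_i$ (out of $t_i$), the remaining branch edges forming a jammable route that bypasses $\mathcal{N}$.

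\textbf{From $\mathcal{I}$ to $\mathcal{I}_c$.}
Fix $\epsilon>0$ and a length-$n$ code for $\mathcal{I}$ with error at most $\epsilon$. I run this code inside $\mathcal{N}$: the super-source holds $M=(M_1,\dots,M_k)$, puts $M_i$ on $a_i$ so $s_i$ feeds $M_i$ into $\mathcal{N}$ as source $i$ would, and simultaneously forwards each $M_i$ along branch $i$'s bypass; $t_i$ applies the decoder of terminal $i$, compares the result with the bypass copy of $M_i$, and emits on $b_i$ either the verified $M_i$ or a one-symbol error flag. Since $\mathcal{A}$ consists of singletons and excludes every $a_i,b_i$, an admissible error corrupts either one edge of $\mathcal{N}$ (all bypasses intact) or one auxiliary edge of a single branch ($\mathcal{N}$, hence the $\mathcal{N}$-verified $b_i$, intact); in both cases the decoder at $t$ reads off from the flags which branches are affected and recovers their messages from the clean bypasses. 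Hence the code satisfies $\mathcal{I}_c$ at rate $k$ with error at most $\epsilon$, and feasibility transfers; $\epsilon=0$ gives the zero-error statement.

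\textbf{From $\mathcal{I}_c$ to $\mathcal{I}$ (the main step).}
Fix $\epsilon>0$ and a length-$n$ code satisfying $\mathcal{I}_c$ at rate $k$ with error at most $\epsilon$; I extract an explicit code for $\mathcal{I}$. First, a cut/entropy argument: the edges leaving $s$ and those entering $t$ each form an $s$--$t$ cut through which all $H(M)=nk$ bits must pass, so on the $1-\epsilon$ fraction of messages the code satisfies, the signals arriving at $s_1,\dots,s_k$ on $a_1,\dots,a_k$ determine $M$; relabeling the message set by the induced near-bijection lets me treat the signal on $a_i$ as an essentially uniform, essentially independent sub-message $M_i\in[2^n]$. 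Second, I use the error-correcting guarantee branch by branch: for each $j$, the admissible error jamming an auxiliary edge of branch $j$ destroys the bypass copy of $M_j$, yet $t$ must still output $M_j$, so the only undamaged route then conveying $M_j$ to $t$ is the one through $\mathcal{N}$ and out of $t_j$ on $b_j$; hence the inherited encoding of $\mathcal{N}$ must deliver $M_j$ to $t_j$, and I read off the corresponding decoder. Running the $\mathcal{I}_c$-encoders restricted to $\mathcal{N}$ with source $s_i$ supplying $M_i$ and terminal $i$ applying the extracted decoder yields a code for $\mathcal{I}$; a union bound over the $k$ branch events and the relabeling event bounds its error by a quantity tending to $0$ as $\epsilon\to0$, so feasibility transfers, and $\epsilon=0$ again recovers the zero-error version.

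\textbf{Expected main obstacle.}
The delicate part is the ``only if'' direction in the vanishing-error model, exactly where \cite{Huang:2014vg} was incomplete. With only a vanishing error, the cut argument yields that the $a_i$- and $b_i$-signals determine $M$ only up to an event of probability $O(\epsilon)$, so the relabeled sub-messages $M_i$ are not exactly uniform or independent and the branchwise decoders from the error-correcting property hold only on high-probability sets; I will need to distill from this an honest fixed-length code for $\mathcal{I}$ with genuinely $o(1)$ error, carefully controlling how the $k$ branchwise failure events and the relabeling discrepancy accumulate and arranging that the source inputs become the independent uniform messages required by $\mathcal{I}$. Performing this distillation explicitly, rather than through an implicit information-theoretic argument, is the technical heart of the proof and is what both strengthens the earlier result and subsumes the zero-error case.
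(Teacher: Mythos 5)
Your forward direction (rate $k$ in $\mathcal{I}_c$ implies unit rate in $\mathcal{I}$) stops exactly where the real work begins. From the single-error guarantee at the super-terminal you infer that ``the inherited encoding of $\mathcal{N}$ must deliver $M_j$ to $t_j$, and I read off the corresponding decoder,'' but this inference is not valid: the terminal $t$ of $\mathcal{I}_c$ decodes from \emph{all} of $b_1,\dots,b_k$, and since the signals $z'_\ell$ are computed inside $\mathcal{N}$ from all of $z_1,\dots,z_k$, information about $M_j$ may perfectly well leave $\mathcal{N}$ on branches other than $j$; resilience to jamming branch $j$'s bypass therefore does not localize a decoder for $M_j$ at the node $t_j$, which sees only its own in-edges. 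The paper closes precisely this gap with an explicit construction: two majority-style estimators $\psi_i$ (guessing $b_i$ from $z'_i$) and $\pi_i$ (guessing $a_i=m_i$ from $b_i$), defined by maximizing $|\mathcal{M}(\cdot,\cdot)|$ over $\mathcal{M}^{\text{good}}$, whose failure sets $\mathcal{M}^\psi_i$ and $\mathcal{M}^\pi_i$ are shown to have size $O(\epsilon)\cdot 2^{kn}$ (Lemmas \ref{psierr}--\ref{three}) by converting any frequent ambiguity into many distinct elements of $\mathcal{B}^{\text{err}}$ or $\mathcal{A}^{\text{err}}$ via adversarial errors injected on $x_i$, $y_i$ or $z'_i$, together with a pigeonhole argument on the $b$-cut; the decoder at $t_i$ is then $\phi_{z_i}\circ\pi_i\circ\psi_i\circ\phi_{z'_i}$ and a union bound gives error at most $6k\epsilon$. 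Your proposal contains neither these estimators nor any substitute counting argument---you yourself flag the ``distillation'' as the technical heart and defer it---so the main step is asserted rather than proved.

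There is also a flaw in your backward direction. You have $t_i$ emit on $b_i$ ``either the verified $M_i$ or a one-symbol error flag,'' but $b_i$ has exactly unit capacity ($2^n$ values per block of length $n$) while $M_i$ ranges over $[2^n]$, so there is no room for an extra flag symbol; shrinking the sub-messages to make room yields only rate $k-k/n$, i.e., \emph{asymptotic} feasibility of rate $k$, not feasibility---and the gap between these two notions is exactly the point of Theorem \ref{nonachex} and Corollary \ref{cor:2}, so it cannot be waved away here. In the gadget of Figure \ref{zeroerr} no flag is needed: each branch carries two bypass copies $x_i,y_i$ of the signal on $a_i$ into $B_i$, so $B_i$ forwards their common value when they agree and forwards $z'_i$ when they disagree, and when unit rate is feasible in $\mathcal{I}$ the signal $z'_i$ can already be $t_i$'s decoded estimate of $M_i$; the terminal then reads $b_1,\dots,b_k$ directly, achieving rate exactly $k$ with error at most $\epsilon$ (this direction is the one the paper attributes to \cite{Huang:2014vg}).
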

The backward direction of the theorem, i.e., unit rate in $\mathcal{I}$ implies rate $k$ in $\mathcal{I}_c$, is simple and is prove in \cite{Huang:2014vg}. 
In the remainder of this section we prove the forward direction of Theorem \ref{epsilonerrk2}, i.e., the feasibility of rate $k$ in $\mathcal{I}_c$ implies the feasibility of unit rate in $\mathcal{I}$.

Suppose in $\mathcal{I}_c$ a rate of $k$ is achieved by a network code $\mathcal{C}=\{ \phi_e, \phi_t\}_{e\in\mathcal{E}}$ with length $n$, and with a probability of error $\epsilon$. Recall that $M$ is the source message uniformly distributed over $\mathcal{M} = [2^{kn}]$, and $\hat{M}$ is the output of the decoder at the terminal.  Let $\mathcal{M}^{\text{good}}=\{ m \in [2^{kn}] : \mathcal{C} \text{ satisfies } \mathcal{I}_c \text{ under transmission } m \}$ be the subset of messages that can be decoded correctly under any error pattern $\bm{r} \in \mathcal{R}_{\mathcal{A}}$. Denote by $\mathcal{M}^{\text{bad}} = \mathcal{M} \backslash \mathcal{M}^{\text{good}}$, then for any $m \in \mathcal{M}^{\text{bad}}$, there exists an error pattern $\bm{r} \in \mathcal{R}_{\mathcal{A}}$ such that $\hat{M} \ne m$ if $M=m$ and $\bm{r}$ occurs, i.e., a decoding error occurs. Because $\mathcal{C}$ satisfies $\mathcal{I}_c$ with error probability $\epsilon$, it follows that $|\mathcal{M}^{\text{bad}}| \le 2^{kn} \epsilon$ and thus $|\mathcal{M}^{\text{good}}| \ge (1-\epsilon) \cdot 2^{kn}$.

We introduce some notation needed in the proof. In problem $\mathcal{I}_c$, under the network code $\mathcal{C}$, for $i=1,...,k$, let $x_i(m,\bm{r}): \mathcal{M} \times \mathcal{R}_{\mathcal{A}} \to [2^n]$ be the  signal received from channel $x_i$ when $M=m$ and the error pattern $\bm{r}$ happens. Let $\bm{r} = \bm{0}$ denotes the case that no error has occurred  in the network. Let $x_i(m) = x_i(m,\bm{0})$, $\bm{x}(m,\bm{r}) = (x_1(m,\bm{r}) , ...,  x_k(m,\bm{r}) )$ and $\bm{x}(m) = (x_1(m), ...,  x_k(m))$. We define functions $a_i, b_i, y_i, z_i, z'_i, \bm{a}, \bm{b}, \bm{y}, \bm{z}, \bm{z}'$ for problem $\mathcal{I}_c$ in a similar way.

Notice that the set of edges $a_1, ..., a_k$ forms a cut-set from $s$ to $t$, and so does the set of edges $b_1, ..., b_k$.
Therefore for any $m_1, m_2 \in \mathcal{M}^{\text{good}}$, $m_1 \ne m_2$, it follows from the decodability constraint that $\bm{a}(m_1) \ne \bm{a}(m_2)$ and $\bm{b}(m_1) \ne \bm{b}(m_2)$. Setting $\mathcal{B}^{\text{good}} = \{ \bm{b}(m) : m \in \mathcal{M}^{\text{good}} \}$, it then follows from   $|\mathcal{M}^{\text{good}}| \ge (1-\epsilon) \cdot 2^{kn}$ that $|\mathcal{B}^{\text{good}}| \ge (1-\epsilon) \cdot 2^{kn} $. Setting $\mathcal{B}^\text{err} = [2^n]^k \backslash \mathcal{B}^\text{good}$, it follows that $|\mathcal{B}^\text{err}| \le  2^{kn} \epsilon$. We define $\mathcal{A}^\text{good}$ and $\mathcal{A}^\text{err}$ similarly and so $|\mathcal{A}^{\text{good}}| \ge (1-\epsilon) \cdot 2^{kn}$, $|\mathcal{A}^\text{err}| \le  2^{kn} \epsilon$.

Let $\mathcal{M}(\hat{z}'_i, \hat{b}_i) = \{ m \in \mathcal{M}^\text{good}: z'_i(m) = \hat{z}'_i, b_i(m) = \hat{b}_i  \}$, we define a  function $\psi_i : [2^n] \to [2^n]$  as:
\begin{align}\label{psii}
\psi_i(\hat{z}'_i) = \arg \max_{\hat{b}_i  } | \mathcal{M}(\hat{z}'_i, \hat{b}_i) | \triangleq \hat{b}_{i, \hat{z}'_i}
\end{align}
Function $\psi_i$ will be useful later, when we design the network codes in $\mathcal{I}$. Intuitively, in the absence of adversarial errors,  $\psi_i$ estimates the signal transmitted on edge $b_i$ given that the signal transmitted on edge $z'_i$ is $\hat{z}'_i$. In the following we analyze how often will $\psi_i$ make a mistake. Define $\mathcal{M}^\psi_i = \{ m \in  \mathcal{M}^\text{good} : \psi_i(z'_i(m)) \ne b_i(m) \}$. Notice that $\mathcal{M}_i^\psi$ is the set of messages that, when they are transmitted by the source, $\psi_i$ will make a mistake in guessing the signal transmitted on $b_i$. Lemma \ref{psierr} shows that the size of this set is small. 
\begin{lemma}\label{psierr}
$|\mathcal{M}^\psi_i| \le 2 \epsilon \cdot 2^{kn}$.
\end{lemma}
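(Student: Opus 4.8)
The plan is to prove the bound by a pigeonhole argument that groups the good messages by the value they induce on $z'_i$ and then charges the ``minority mass'' in each group to a distinct error configuration at the cut $\{b_1,\dots,b_k\}$.

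First I would reduce the claim to a clean counting statement. For each value $\zeta$ of $z'_i$ put $\mathcal{M}_\zeta=\{m\in\mathcal{M}^{\text{good}}:z'_i(m)=\zeta\}$ and $\ell_\zeta=|\mathcal{M}_\zeta|$. By \eqref{psii}, $\psi_i(\zeta)$ is simply the most frequent value of $b_i$ over $\mathcal{M}_\zeta$, so $\mathcal{M}^\psi_i$ is the disjoint union over $\zeta$ of $\{m\in\mathcal{M}_\zeta:b_i(m)\ne\psi_i(\zeta)\}$ and hence $|\mathcal{M}^\psi_i|=\sum_\zeta\big(\ell_\zeta-\max_{\hat b_i}|\mathcal{M}(\zeta,\hat b_i)|\big)$. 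It thus suffices to bound this sum.

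The heart of the argument is a structural observation about $\mathcal{I}_c$. I claim that for any $m,m'\in\mathcal{M}^{\text{good}}$ with $z'_i(m)=z'_i(m')$, the vector obtained from $\bm{b}(m)$ by overwriting its $i$-th coordinate with $b_i(m')$ can be produced on the cut $\{b_1,\dots,b_k\}$, when $m$ is transmitted, by corrupting a \emph{single} edge of the $i$-th branch — namely the companion edge of $z'_i$ feeding the tail of $b_i$ (denoted $z_i$ in the construction of Figure~\ref{zeroerr}), reset to its $m'$-value; since $z'_i(m)=z'_i(m')$ this forces $b_i=b_i(m')$, while the other branches, and every edge between $\{b_j\}$ and $t$, are untouched. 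This is a legal error because it involves one edge and that edge is not among $\{a_j\},\{b_j\}$, and because it lies upstream of the cut $\{b_j\}$, the decoder's output depends only on the cut signals; as $m\in\mathcal{M}^{\text{good}}$ it must equal $m$. Two consequences follow: (i) on each $\mathcal{M}_\zeta$ the map $m\mapsto(b_j(m))_{j\ne i}$ is injective (two good messages sharing $z'_i$ and agreeing on all $b_j$, $j\ne i$, would both be forced to decode to the message $\bm{b}(m')$ represents, hence be equal); and (ii), writing $B_\zeta=\{(b_j(m))_{j\ne i}:m\in\mathcal{M}_\zeta\}$ and $C_\zeta=\{b_i(m):m\in\mathcal{M}_\zeta\}$, every vector whose off-$i$ coordinates lie in $B_\zeta$ and whose $i$-th coordinate lies in $C_\zeta$ is decoded to the unique $m\in\mathcal{M}_\zeta$ matching its off-$i$ coordinates, and therefore lies in $\mathcal{B}^{\text{good}}$ precisely when its $i$-th coordinate is the ``correct'' value $b_i(m)$. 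Hence this rectangle contains exactly $\ell_\zeta(|C_\zeta|-1)$ vectors of $\mathcal{B}^{\text{err}}$.

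To conclude, note that $\ell_\zeta-\max_{\hat b_i}|\mathcal{M}(\zeta,\hat b_i)|\le\ell_\zeta(|C_\zeta|-1)$ (since $b_i$ takes $|C_\zeta|$ distinct values on $\mathcal{M}_\zeta$), and that these rectangles of $\mathcal{B}^{\text{err}}$-vectors are pairwise disjoint across $\zeta$: a common vector would be decoded to an element of $\mathcal{M}_\zeta$ and of $\mathcal{M}_{\zeta'}$ at once, impossible for $\zeta\ne\zeta'$ because $z'_i$ is a well-defined function. Summing over $\zeta$ then gives $|\mathcal{M}^\psi_i|\le|\mathcal{B}^{\text{err}}|\le\epsilon\cdot 2^{kn}$, which is within the claimed $2\epsilon\cdot 2^{kn}$. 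The main obstacle is the structural observation itself — verifying from the topology of Figure~\ref{zeroerr} and the definition of $\mathcal{A}$ that a single admissible edge error in branch $i$ rewrites $b_i$ to exactly the value carried under another message with the same $z'_i$ without disturbing the remaining $b_j$ or the path from the cut to $t$; the extra $\epsilon$ of slack in the statement comfortably absorbs the crude plurality estimate and, if the construction requires it, a vanishing fraction of messages for which such a rewrite would need to touch an edge outside $\mathcal{A}$.
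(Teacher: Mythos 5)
Your reduction to the per-$\zeta$ counting identity is fine and matches the paper's (\ref{mpsi}), but the structural observation that carries the rest of your argument is false, and the topology it relies on is misread. In Figure~\ref{zeroerr} the edge $z_i$ does not feed the tail node $B_i$ of $b_i$: it feeds the source $s_i$ of the embedded network $\mathcal{N}$, while $B_i$ has the three inputs $x_i, y_i, z'_i$. Consequently, when $m$ is transmitted and $z'_i(m)=z'_i(m')$, forcing $B_i$ to output exactly $b_i(m')$ would require replacing both $x_i(m)$ and $y_i(m)$ by their $m'$-values, i.e.\ corrupting two edges, which is not an admissible error pattern (every set in $\mathcal{A}$ is a single edge). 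A single admissible error changes only one of $x_i,y_i,z'_i$, and since the encoding function at $B_i$ is arbitrary, the resulting output need not equal $b_i(m')$; concretely, for the code in the proof of Theorem~\ref{nonachex}, where $B_i$ forwards the common value of $x_i$ and $y_i$ when they agree and otherwise forwards $z'_i$ with a flag bit, no single-edge error can produce $b_i(m')$ at the cut. So your claim (ii), that the rectangle contributes exactly $\ell_\zeta(|C_\zeta|-1)$ elements of $\mathcal{B}^\text{err}$, is unsubstantiated, and with it the bound $|\mathcal{M}^\psi_i|\le|\mathcal{B}^\text{err}|$. (A rectangle-style argument of this kind is exactly what the paper uses in Lemma~\ref{three}, but only under conditioning on a common value of $a_i$, where $x_i$ and $y_i$ automatically agree across the messages involved and a single error on $z'_i$ suffices; it does not transfer to conditioning on $z'_i$.)

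What the paper does instead (Lemma~\ref{lfour}) is a hybrid argument: for $m_1,m_2$ sharing $z'_i$ with $b_i(m_1)\ne b_i(m_2)$, corrupt $x_i$ to $x_i(m_2)$ under transmission $m_1$, or $y_i$ to $y_i(m_1)$ under transmission $m_2$; both give $B_i$ the same input triple, so the common output differs from at least one of $b_i(m_1),b_i(m_2)$, and the corresponding cut vector is an element of $\mathcal{B}^\text{err}$ decoded to $m_1$ or to $m_2$. This charges only one error vector per \emph{pair} of conflicting messages, which is exactly where the factor $2$ in the statement comes from; your per-message charging (and hence your claimed $\epsilon\cdot 2^{kn}$ bound) is not obtainable along these lines. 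Your injectivity claim (i) happens to be true, but it too needs the hybrid argument rather than the single-edge rewrite you invoke, and your closing hedge about a vanishing fraction of exceptional messages does not repair the proof, since under codes like the one above the rewrite fails for all messages, not for a negligible few.
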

We make the following combinatorial observation before proving Lemma~\ref{psierr}. Lemma~\ref{lfour} is a variation of \cite[Lemma 4]{Huang:2014vg}

\begin{lemma}\label{lfour}
Let $\mathcal{M}(\hat{z}'_i) = \{m \in \mathcal{M}^\text{good}: z'_i(m) = \hat{z}'_i\}$, then for any  $m_1, m_2 \in \mathcal{M}(\hat{z}'_i)$ such that $b_i(m_1) \ne b_i(m_2)$, there exists an element of $\mathcal{B}^\text{err}$ that will be decoded by terminal $t$ to either $m_1$ or $m_2$.
\end{lemma}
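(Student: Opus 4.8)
The plan is to construct, inside $\mathcal{I}_c$ and under the transmission $M=m_1$, a single‑edge error that changes only the $i$‑th coordinate of the vector of signals on the cut $\{b_1,\dots,b_k\}$, replacing $b_i(m_1)$ by $b_i(m_2)$, so that the resulting cut vector is
\begin{align*}
\bm{b}''\triangleq\big(b_1(m_1),\dots,b_{i-1}(m_1),\,b_i(m_2),\,b_{i+1}(m_1),\dots,b_k(m_1)\big),
\end{align*}
and then to argue that $\bm{b}''\in\mathcal{B}^{\text{err}}$ while $t$ nevertheless decodes $\bm{b}''$ to $m_1$.

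First I would note from Figure~\ref{zeroerr} that the node generating $b_i$ has exactly the two incoming edges $z'_i$ and $y_i$, so that $b_i$ is $\phi_{b_i}$ evaluated at the pair $(z'_i,y_i)$, and that $y_i$ is neither an $a$‑edge nor a $b$‑edge, hence $\{y_i\}\in\mathcal{A}$. Writing $\hat{z}'_i=z'_i(m_1)=z'_i(m_2)$, the hypothesis $b_i(m_1)\ne b_i(m_2)$ forces $y_i(m_1)\ne y_i(m_2)$, since otherwise $b_i(m_1)=\phi_{b_i}(\hat{z}'_i,y_i(m_1))=\phi_{b_i}(\hat{z}'_i,y_i(m_2))=b_i(m_2)$. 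Let $\bm{r}$ be the error pattern supported on $y_i$ alone with $r_{y_i}=y_i(m_1)\oplus y_i(m_2)\ne\bm{0}$; since $\{y_i\}\in\mathcal{A}$ we have $\bm{r}\in\mathcal{R}_{\mathcal{A}}$, and under $M=m_1$ this pattern replaces the output of $y_i$ by $y_i(m_2)$. Because $y_i$ emanates from $t_i$ and enters only the generator of $b_i$, this error leaves the operation of $\mathcal{N}$ and every signal $b_j$ with $j\ne i$ untouched, while on $b_i$ the new signal is $\phi_{b_i}(z'_i(m_1),y_i(m_2))=\phi_{b_i}(z'_i(m_2),y_i(m_2))=b_i(m_2)$; thus the signals on $\{b_1,\dots,b_k\}$ are exactly $\bm{b}''$.

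Next, because $\{b_1,\dots,b_k\}$ is a cut from $s$ to $t$ and $y_i$ lies strictly upstream of it, everything downstream of the cut --- in particular the inputs of $t$ and the decoded value $\hat{M}$ --- is one fixed function $D(\cdot)$ of the cut signals, namely the function that governs error‑free operation, so $D(\bm{b}(m))=m$ for every $m\in\mathcal{M}^{\text{good}}$. As $m_1\in\mathcal{M}^{\text{good}}$, the code $\mathcal{C}$ decodes correctly under every pattern in $\mathcal{R}_{\mathcal{A}}$, hence in the run just constructed $D(\bm{b}'')=m_1$. If $\bm{b}''$ belonged to $\mathcal{B}^{\text{good}}$, say $\bm{b}''=\bm{b}(m_3)$ with $m_3\in\mathcal{M}^{\text{good}}$, then $m_3=D(\bm{b}(m_3))=D(\bm{b}'')=m_1$, forcing $\bm{b}''=\bm{b}(m_1)$; but these vectors disagree in coordinate $i$ since $b_i(m_2)\ne b_i(m_1)$, a contradiction. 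Therefore $\bm{b}''\in\mathcal{B}^{\text{err}}$ and $t$ decodes it to $m_1\in\{m_1,m_2\}$, as claimed. (Running the symmetric argument under $M=m_2$, corrupting $y_i$ to $y_i(m_1)$, instead produces an element of $\mathcal{B}^{\text{err}}$ decoded to $m_2$; either one proves the lemma.)

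The step that I expect needs the most care is the claim that corrupting $y_i$ perturbs exactly coordinate $i$ of the cut vector and nothing else: this is precisely where the topology in Figure~\ref{zeroerr} is used --- that $y_i$ feeds only the generator of $b_i$, and that this generator reads only $z'_i$ and $y_i$, so $z'_i$ stays at $\hat{z}'_i$ and $\phi_{b_i}$ is steered from $b_i(m_1)$ to $b_i(m_2)$ purely through $y_i$. The other ingredient, that decoding from the $b$‑cut is insensitive to errors placed strictly upstream of it, is immediate from $\{b_1,\dots,b_k\}$ being a cut, and is what lets us compare the adversarial run under $m_1$ with the hypothetical error‑free run under $m_3$.
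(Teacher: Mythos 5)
There is a genuine gap, and it lies in your reading of the topology of Figure~\ref{zeroerr}. In the reduction, node $B_i$ (the tail of $b_i$) has \emph{three} incoming edges, not two: the two parallel ``direct'' edges $x_i$ and $y_i$ coming from the head of $a_i$ (bypassing $\mathcal{N}$), plus the edge $z'_i$ coming out of terminal $t_i$ of $\mathcal{N}$. This is explicit in the paper's code construction (``Node $B_i$, by observing the signals received from edges $x_i, y_i, z'_i$\dots'') and in its proof of Lemma~\ref{lfour}, where $B_i$'s input is the triple $(x_i(\cdot), y_i(\cdot), z'_i(\cdot))$; note also that $y_i$ does not emanate from $t_i$ (in the Figure~\ref{nonach} instance $t_i$ only sees $\sum_j M_j$, while $y_i$ carries $M_i$). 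With the correct three-input topology your two key steps fail: the inference ``$z'_i(m_1)=z'_i(m_2)$ and $b_i(m_1)\ne b_i(m_2)$ force $y_i(m_1)\ne y_i(m_2)$'' no longer holds (the difference may sit on $x_i$), and, more fundamentally, a \emph{single}-edge error cannot in general steer $B_i$'s inputs under transmission $m_1$ to its error-free inputs under $m_2$, because that may require changing both $x_i$ and $y_i$. So the cut vector $\bm{b}''=(b_1(m_1),\dots,b_i(m_2),\dots,b_k(m_1))$ you construct need not be reachable by any admissible error pattern in $\mathcal{R}_{\mathcal{A}}$, and the rest of your argument has nothing to stand on. (The redundancy of the pair $x_i,y_i$ is not incidental: it is exactly what the achievability code of Theorem~\ref{nonachex} exploits, and with your assumed two-input $B_i$ the backward direction of the reduction would itself be false.)

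The paper circumvents precisely this obstacle with a symmetric two-scenario argument: let $\bm{r}_1$ corrupt $x_i$ to $x_i(m_2)$ under transmission $m_1$, and $\bm{r}_2$ corrupt $y_i$ to $y_i(m_1)$ under transmission $m_2$; in both runs $B_i$ sees the same triple $(x_i(m_2), y_i(m_1), \hat{z}'_i)$, so $b_i(m_1,\bm{r}_1)=b_i(m_2,\bm{r}_2)$, and since $b_i(m_1)\ne b_i(m_2)$ this common value must differ from at least one of them. Whichever case occurs, the corresponding corrupted $\bm{b}$-tuple differs from every $\bm{b}(m)$, $m\in\mathcal{M}^{\text{good}}$, yet is decoded to $m_1$ (resp.\ $m_2$) by goodness of that message, hence lies in $\mathcal{B}^{\text{err}}$. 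The downstream part of your write-up (decoding is a fixed function of the $b$-cut, and a good-decoded corrupted tuple cannot coincide with any $\bm{b}(m)$ for good $m$) is fine and matches the paper; what is missing is the device that produces an admissible single-edge error at all, and that device is the paper's two-scenario comparison rather than your direct steering of $b_i$ to $b_i(m_2)$.
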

\begin{proof}
Consider any $m_1, m_2 \in \mathcal{M}(\hat{z}'_i)$ such that $b_i(m_1) \ne b_i(m_2)$. Let $\bm{r}_1$ be the error pattern that changes the signal on $x_i$ to be $x_i(m_2)$, and let $\bm{r}_2$ be the error pattern that changes the signal on $y_i$ to be $y_i(m_1)$. Then if $m_1$ is transmitted by the source and $\bm{r}_1$ happens, node $B_i$ will receive the same inputs $(x_i(m_2), y_i(m_1), z'_i(m_1)=z'_i(m_2))$ as in the situation that $m_2$ is transmitted and $\bm{r}_2$ happens. Therefore $b_i(m_1, \bm{r}_1) = b_i(m_2, \bm{r}_2)$, and so either $b_i(m_1, \bm{r}_1)  \ne b_i(m_1)$ or $b_i(m_2, \bm{r}_2) \ne b_i(m_2)$ because by hypothesis $b_i(m_1) \ne b_i(m_2)$. Consider the first case that $b_i(m_1, \bm{r}_1)  \ne b_i(m_1)$, then the tuple of signals $(b_1(m_1, \bm{r}_1), ..., b_k(m_1, \bm{r}_1)) = (b_1(m_1), ..., b_i(m_1, \bm{r}_1), ..., b_k(m_1))$ will be decoded by the terminal to message $m_1$ because of the fact that $m_1 \in \mathcal{M}^\text{good}$ which is correctly decodable under any error pattern $\bm{r} \in \mathcal{R}_\mathcal{A}$. Therefore this tuple of signals is an element of $\mathcal{B}^\text{err}$ since it does not equals $\bm{b}(m_1)=(b_1(m_1),...,b_k(m_1))$ and it does not equal $\bm{b}(m)$, for any $m \ne m_1$, $m \in \mathcal{M}^\text{good}$, because otherwise it will be decoded by the terminal to $m$.
Similarly in the latter case that $b_1(m_2, \bm{r}_2) \ne b_1(m_2)$, then $(b_2(m_2, \bm{r}_2), ..., b_k(m_2, \bm{r}_2)) = (b_2(m_2), ..., b_i(m_2, \bm{r}_2), ..., b_k(m_2))$ is an element of $\mathcal{B}^\text{err}$ and will be decoded by the terminal to $m_2$.
Therefore in both cases we are able to find an element of $\mathcal{B}^\text{err}$ that will be decoded by the terminal to either $m_1$ or $m_2$.
\end{proof}


\begin{proof}[Proof (of Lemma \ref{psierr})]
We can partition $\mathcal{M}^\psi_i$ as
\begin{align*}
\mathcal{M}^\psi_i = \bigcup_{\hat{z}'_i} \left( \mathcal{M}(\hat{z}'_i) \backslash \mathcal{M}(\hat{z}'_i,\hat{b}_{i,\hat{z}'_i}) \right),
\end{align*}
and so
\begin{align}\label{mpsi}
|\mathcal{M}^\psi_i | = \sum_{\hat{z}'_i} \left( |\mathcal{M}(\hat{z}'_i)| - |\mathcal{M}(\hat{z}'_i,\hat{b}_{i,\hat{z}'_i})| \right).
\end{align}

Consider an arbitrary $\hat{z}'_i$ and the set $\mathcal{M}(\hat{z}'_i)$. We define an iterative procedure as follows. Initialize $\mathcal{W} := \mathcal{M}(\hat{z}'_i)$. If there exist two messages $m_1, m_2 \in \mathcal{W}$ such that $b_i(m_1) \ne b_i(m_2)$, then delete both $m_1, m_2$  from $\mathcal{W}$. Repeat the operation until there does not exist $m_1, m_2 \in \mathcal{W}$ such that $b_i(m_1) \ne b_i(m_2)$.

After the procedure terminates, it follows that $|\mathcal{W}| \le  |\mathcal{M}(\hat{z}'_i,\hat{b}_{i,\hat{z}'_i})|$, otherwise by the definition of $\hat{b}_{i,\hat{z}'_i}$ there must exist $m_1, m_2 \in \mathcal{W}$ such that $b_i(m_1) \ne b_i(m_2)$. Therefore at least $|\mathcal{M}(\hat{z}'_i)| - |\mathcal{M}(\hat{z}'_i,\hat{b}_{i,\hat{z}'_i})|$  elements are deleted from $\mathcal{M}(\hat{z}'_i)$. By Lemma \ref{lfour}, each pair of elements deleted corresponds to an element of $\mathcal{B}^\text{err}$. Also by Lemma \ref{lfour} the elements of $\mathcal{B}^\text{err}$ corresponding to different deleted pairs are distinct. Summing over all possible values of $\hat{z}'_i$, it follows that the total number of deleted pairs is smaller than the size of $\mathcal{B}^\text{err}$:
\begin{align}
\frac{1}{2} \sum_{\hat{z}'_i} \left( |\mathcal{M}(\hat{z}'_i)| - |\mathcal{M}(\hat{z}'_i,\hat{b}_{i,\hat{z}'_i})| \right) \nonumber &\\
& \hspace{-23mm} \le \sum_{\hat{z}'_i} \text{ \# of pairs deleted from }\mathcal{M}(\hat{z}'_i) \nonumber \\
& \hspace{-23mm} \le |\mathcal{B}^\text{err}| = \epsilon \cdot 2^{kn}.\label{ndel}
\end{align}

Combining (\ref{mpsi}) and (\ref{ndel}) we have $|\mathcal{M}^\psi_i| \le 2 \epsilon \cdot 2^{kn}$.
\end{proof}

Next, let $\mathcal{M}(\hat{a}_i, \hat{b}_i) = \{ m \in \mathcal{M}^\text{good}: a_i(m) = \hat{a}_i, b_i(m) = \hat{b}_i  \}$, we define a  function $\pi_i : [2^n] \to [2^n]$  as:
\begin{align}\label{pii}
\pi_i(\hat{b}_i) = \arg \max_{\hat{a}_i  } | \mathcal{M}(\hat{a}_i, \hat{b}_i) | \triangleq \hat{a}_{i, \hat{b}_i}
\end{align}

Function $\pi_i$ will be useful later for designing the network codes in $\mathcal{I}$. Intuitively, in the absence of adversarial errors, $\pi_i$ estimates the signal transmitted on edge $a_i$ given that the signal transmitted on edge $b_i$ is $\hat{b}_i$.
In the following we analyze how often will $\pi_i$ make a mistake. Define $\mathcal{M}^\pi_i = \{ m \in  \mathcal{M}^\text{good} : \pi_i(b_i(m)) \ne a_i(m) \}$.  Notice that $\mathcal{M}_i^\pi$ is the set of messages that, when they are transmitted by the source, $\pi_i$ will make a mistake in guessing the signal transmitted on $a_i$.  Lemma \ref{pierr} shows that the size of this set is small. 

\begin{lemma}\label{pierr}
$|\mathcal{M}^\pi_i| \le 3 \epsilon \cdot 2^{kn}$.
\end{lemma}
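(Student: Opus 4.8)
\textbf{Proof proposal for Lemma~\ref{pierr}.}

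The plan is to run, essentially verbatim, the argument used for Lemma~\ref{psierr}, but with the pair $(\hat z'_i,b_i)$ replaced by $(\hat b_i,a_i)$. Concretely, set $\mathcal{M}(\hat b_i)=\{m\in\mathcal{M}^{\text{good}}:b_i(m)=\hat b_i\}$ and partition
\[
\mathcal{M}^\pi_i=\bigcup_{\hat b_i}\Bigl(\mathcal{M}(\hat b_i)\setminus\mathcal{M}(\hat b_i,\hat a_{i,\hat b_i})\Bigr),\qquad
|\mathcal{M}^\pi_i|=\sum_{\hat b_i}\Bigl(|\mathcal{M}(\hat b_i)|-|\mathcal{M}(\hat b_i,\hat a_{i,\hat b_i})|\Bigr).
\]
For each fixed $\hat b_i$ I would run an iterative removal on $\mathcal{W}:=\mathcal{M}(\hat b_i)$: while there exist $m_1,m_2\in\mathcal{W}$ with $a_i(m_1)\neq a_i(m_2)$, remove one or both of them (according to the rule in the last paragraph). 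At termination every surviving message has one common $a_i$-value, so $|\mathcal{W}|\le|\mathcal{M}(\hat b_i,\hat a_{i,\hat b_i})|$ by the definition of $\hat a_{i,\hat b_i}$; hence $|\mathcal{M}^\pi_i|$ is at most the total number of messages removed over all $\hat b_i$.

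The key step, and the analog of Lemma~\ref{lfour}, is the following. I claim that for good $m_1,m_2\notin\mathcal{M}^\psi_i$ with $b_i(m_1)=b_i(m_2)=\hat b_i$ and $a_i(m_1)\neq a_i(m_2)$, there are \emph{distinct} elements $e_1,e_2\in\mathcal{B}^{\text{err}}$ (equivalently, of $\mathcal{A}^{\text{err}}$) that terminal $t$ decodes to $m_1$ and to $m_2$ respectively; moreover, because an element that decodes to $m$ cannot decode to any other message, the elements produced for disjoint pairs are automatically all distinct. To obtain $e_1$: transmit $m_1$ and introduce a single error on one of the edges feeding $B_i$ — first the direct edge $x_i$ (whose signal is a function of $a_i$ and whose corruption, being local to branch $i$, changes neither $z'_i$ nor any signal in the other branches), changing its signal to $x_i(m_2)$; if this makes $b_i$ differ from $b_i(m_1)$, the resulting tuple on the cut $\{b_1,\dots,b_k\}$ still decodes to $m_1$ (since $m_1\in\mathcal{M}^{\text{good}}$), differs from $\bm b(m_1)$, and so is the desired element of $\mathcal{B}^{\text{err}}$. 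If it does not, repeat with $y_i$, and then with the $\mathcal N$-path input $z'_i$ corrupted to $z'_i(m_2)$. When $z'_i(m_1)=z'_i(m_2)$ this is exactly the mechanism of Lemma~\ref{lfour}; the condition $m_1\notin\mathcal{M}^\psi_i$ (i.e.\ $b_i(m_1)=\psi_i(z'_i(m_1))$) together with the existence of $m_2$ is what forces one of these single errors to actually perturb $B_i$'s output. Symmetrically for $e_2$ by transmitting $m_2$.

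Given this, the counting yields the constant $3$ as follows. In the removal rule, if $m_1\in\mathcal{M}^\psi_i$ (resp.\ $m_2\in\mathcal{M}^\psi_i$) remove just that message and charge it to itself; otherwise remove both $m_1,m_2$ and charge the pair to the two distinct $\mathcal{B}^{\text{err}}$ elements above. The first type of removal consumes a distinct message of $\mathcal{M}^\psi_i$, so there are at most $|\mathcal{M}^\psi_i|\le2\epsilon\cdot2^{kn}$ of them by Lemma~\ref{psierr}. The second type comes in pairs, and since each such pair yields \emph{two} distinct elements of $\mathcal{B}^{\text{err}}$ and elements from different pairs are distinct, the number of such pairs is at most $\tfrac12|\mathcal{B}^{\text{err}}|\le\tfrac12\epsilon\cdot2^{kn}$, i.e.\ at most $\epsilon\cdot2^{kn}$ messages removed this way. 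Summing, $|\mathcal{M}^\pi_i|\le 2\epsilon\cdot2^{kn}+\epsilon\cdot2^{kn}=3\epsilon\cdot2^{kn}$.

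The step I expect to be the real obstacle is the analog of Lemma~\ref{lfour}, and in particular the case $z'_i(m_1)\neq z'_i(m_2)$: here one cannot simply drive $B_i$ to identical inputs from the two transmissions. One must instead argue that if \emph{none} of the three admissible single errors perturbs $B_i$, then the local encoder $\phi_{b_i}$ is constant on the whole combinatorial ``cube'' spanned by the $\{m_1,m_2\}$-values of its three inputs, which — using $m_1,m_2\notin\mathcal{M}^\psi_i$ and the injectivity of $\bm b$ on $\mathcal{M}^{\text{good}}$ — cannot happen (or, if one allows a further $\epsilon$-small exceptional set, only shifts the constant). Making that dichotomy precise, and checking that the charging scheme indeed closes exactly at $3\epsilon\cdot2^{kn}$ rather than a larger multiple of $\epsilon$, is where the work lies.
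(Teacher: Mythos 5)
Your proof hinges on an analog of Lemma~\ref{lfour} that you do not prove, and which is in fact false as stated. For a pair $m_1,m_2\in\mathcal{M}^{\text{good}}$ with $b_i(m_1)=b_i(m_2)$ and $a_i(m_1)\neq a_i(m_2)$, nothing forces any of the three admissible single errors (on $x_i$, $y_i$, or $z'_i$, swapping that input to its $m_2$-value) to perturb $B_i$'s output: the encoding function at $B_i$ may simply be constant on the product set $\{x_i(m_1),x_i(m_2)\}\times\{y_i(m_1),y_i(m_2)\}\times\{z'_i(m_1),z'_i(m_2)\}$. This is perfectly consistent with both messages being in $\mathcal{M}^{\text{good}}$, since the terminal can distinguish $m_1$ from $m_2$ through the other coordinates $b_j$, $j\neq i$, and the conditions $m_1,m_2\notin\mathcal{M}^{\psi}_i$ (i.e., $\psi_i(z'_i(m_\ell))=b_i(m_\ell)$) impose no contradiction because $b_i(m_1)=b_i(m_2)$. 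So the mechanism can produce \emph{zero} elements of $\mathcal{B}^{\text{err}}$ for such a pair, whereas your counting needs \emph{two} distinct elements per pair (note that even the genuine Lemma~\ref{lfour} only guarantees one element per pair); the charging scheme therefore does not close, and the constant $3$ is not recovered. You flag exactly this step as the open obstacle, so the proposal is incomplete at its crucial point.

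The paper avoids this difficulty by grouping messages the other way around: it fixes the value $\hat a_i$ rather than $\hat b_i$. For messages sharing the same $a_i$-value, the inputs $x_i$ and $y_i$ to $B_i$ coincide automatically, so a single error on $z'_i$ (changing it to $z'_i(m_j)$) \emph{exactly} reproduces at $B_i$ the no-error input of another message $m_j$ with the same $a_i$; this forces $b_i(m_0,\bm r_j)=b_i(m_j)$ with no assumption on the encoder, which is the content of Lemma~\ref{three}. The proof then splits $\mathcal{M}^\pi_i$ into messages whose $a_i$-value has few good completions (charged to $\mathcal{A}^{\text{err}}$, giving $\epsilon\cdot 2^{kn}$), messages whose $a_i$-class has non-constant $b_i$ (charged to $\mathcal{B}^{\text{err}}$ via Lemma~\ref{three}, giving $2\epsilon\cdot 2^{kn}$), and shows by a pigeonhole argument on the cut $\{b_1,\dots,b_k\}$ that outside these two events $\pi_i(b_i(m))=a_i(m)$. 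If you want to rescue your route, you would need a replacement for your key claim that does not assume the corruption changes $B_i$'s output; the paper's change of grouping is precisely what makes that assumption unnecessary.
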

 We make the following combinatorial observation before proving Lemma~\ref{pierr}. Lemma \ref{three} is a variation of \cite[Lemma 3]{Huang:2014vg}.
\begin{lemma}\label{three}
 Define $\mathcal{M}(\hat{a}_i) = \{ m \in \mathcal{M}^\text{good} : a_i(m) = \hat{a}_i \}$. If $|\{b_i(m):m \in \mathcal{M}(\hat{a}_i)\}| = L $, then there exist $(L-1)| \mathcal{M} (\hat{a}_i)|$ distinct elements of $\mathcal{B}^\text{err}$ such that each of them will be decoded by terminal $t$ to some message $m \in \mathcal{M}(\hat{a}_i)$.
\end{lemma}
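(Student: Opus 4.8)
The plan is to prove the lemma by showing that \emph{each} message $m\in\mathcal{M}(\hat a_i)$ individually gives rise to $L-1$ distinct elements of $\mathcal{B}^{\text{err}}$ that terminal $t$ decodes to $m$, and then to observe that the elements coming from distinct messages of $\mathcal{M}(\hat a_i)$ are themselves distinct; summing over the $|\mathcal{M}(\hat a_i)|$ messages then gives the claimed $(L-1)|\mathcal{M}(\hat a_i)|$ elements (the cases $L\le 1$ or $\mathcal{M}(\hat a_i)=\emptyset$ being trivial). The elements are produced by the same kind of single-edge corruption used in the proof of Lemma~\ref{lfour}, but now one exploits that conditioning on the value $\hat a_i$ of $a_i$ — rather than only on the value of $z'_i$ — pins down \emph{all} signals inside branch $i$ except the one carried out of $\mathcal{N}$ on edge $y_i$.

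Concretely, fix $\hat a_i$, write $\beta_1,\dots,\beta_L$ for the distinct values in $\{b_i(m):m\in\mathcal{M}(\hat a_i)\}$, fix $m\in\mathcal{M}(\hat a_i)$, and let $\ell_0$ be the index with $b_i(m)=\beta_{\ell_0}$. For each $\ell\ne\ell_0$ choose $m_\ell\in\mathcal{M}(\hat a_i)$ with $b_i(m_\ell)=\beta_\ell$. Since $m$ and $m_\ell$ put the same signal $\hat a_i$ on $a_i$, and since by the wiring of branch $i$ in Figure~\ref{zeroerr} the signals on the edges $x_i$ and $z'_i$ feeding $B_i$ are determined by the signals on $a_i$ and $y_i$ (the structural feature also used in the proof of Lemma~\ref{lfour}), forcing the signal on $y_i$ to be $y_i(m_\ell)$ makes $B_i$ receive exactly the inputs it receives when $m_\ell$ is transmitted with no error, while every other branch is untouched. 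Note $y_i(m)\ne y_i(m_\ell)$ (otherwise $B_i$ would output $b_i(m)=b_i(m_\ell)$), so the error pattern $\bm r^{(\ell)}$ that corrupts the single edge $y_i$ into carrying $y_i(m_\ell)$ is a genuine pattern in $\mathcal{R}_{\mathcal{A}}$, because $\{y_i\}\in\mathcal{A}$. Hence, under transmission of $m$ with $\bm r^{(\ell)}$, the signals on the cut $\{b_1,\dots,b_k\}$ form the tuple $\bm b^{(\ell)}(m)$ that agrees with $\bm b(m)$ in every coordinate except the $i$-th, where it equals $\beta_\ell$.

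It then remains to verify three points. First, $\bm b^{(\ell)}(m)\in\mathcal{B}^{\text{err}}$ and is decoded to $m$: since $m\in\mathcal{M}^{\text{good}}$ and $\bm r^{(\ell)}\in\mathcal{R}_{\mathcal{A}}$, terminal $t$ decodes $\bm b^{(\ell)}(m)$ to $m$; the tuple is not $\bm b(m)$ because $\beta_\ell\ne\beta_{\ell_0}=b_i(m)$; and it cannot equal $\bm b(m')$ for any other $m'\in\mathcal{M}^{\text{good}}$, as then it would be decoded to $m'\ne m$. Second, for a fixed $m$ the $L-1$ tuples $\bm b^{(\ell)}(m)$, $\ell\ne\ell_0$, are pairwise distinct since they differ in the $i$-th coordinate and the $\beta_\ell$ are distinct. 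Third, for $m\ne m'$ in $\mathcal{M}(\hat a_i)$ the tuples built from $m$ are decoded to $m$ and those built from $m'$ to $m'$, hence are distinct. Summing over $m\in\mathcal{M}(\hat a_i)$ produces $(L-1)|\mathcal{M}(\hat a_i)|$ distinct elements of $\mathcal{B}^{\text{err}}$, each decoded to a message of $\mathcal{M}(\hat a_i)$, which is the claim. The delicate point — and the main obstacle — is the structural claim that corrupting the single edge $y_i$ is enough to move $b_i$ onto $\beta_\ell$ while leaving the rest of the network alone, i.e.\ that within branch $i$ the only message-dependence reaching $B_i$ beyond the fixed signal on $a_i$ is mediated by $y_i$; this has to be read off from the construction in Figure~\ref{zeroerr}.
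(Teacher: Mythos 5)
Your overall strategy is the right one and matches the paper's: fix $\hat a_i$, impersonate another message $m_\ell\in\mathcal{M}(\hat a_i)$ with $b_i(m_\ell)=\beta_\ell$ by corrupting a single edge feeding $B_i$, conclude that the resulting $\bm b$-tuple is decoded to $m$ and hence lies in $\mathcal{B}^{\text{err}}$, and count $L-1$ such tuples per message with distinctness across messages guaranteed by decodability. However, you corrupt the wrong edge, and this is exactly the ``delicate point'' you flag. In the construction of Figure~\ref{zeroerr}, the edges $x_i$ and $y_i$ are branch-$i$ edges whose signals are functions of the signal on $a_i$ alone, while $z'_i$ is the edge from terminal $t_i$ of the embedded network $\mathcal{N}$ into $B_i$; since $\mathcal{N}$ is a general network mixing all branches, $z'_i(m)$ depends on the entire message, not on $a_i(m)$ only. (This is visible elsewhere in the paper: the code in the proof of Theorem~\ref{nonachex} sets $\phi_{y_i}(M)=M_i$ but $\phi_{z'_i}(M)=\sum_j M_j$ routed through $\mathcal{N}$, and the decoder in Theorem~\ref{epsilonerrk2} uses $\psi_i\circ\phi_{z'_i}$ precisely because $z'_i$ is the network output; Lemma~\ref{lfour} is the complementary statement that conditions on $z'_i$ and corrupts $x_i$ or $y_i$.) Consequently your structural claim --- that fixing $a_i$ pins down $x_i$ and $z'_i$, so corrupting $y_i$ to $y_i(m_\ell)$ makes $B_i$ see exactly the no-error inputs of $m_\ell$ --- is false: under transmission of $m$ with $y_i$ corrupted, $B_i$ sees $(x_i(m_\ell),y_i(m_\ell),z'_i(m))$, and since in general $z'_i(m)\ne z'_i(m_\ell)$ there is no reason for $b_i$ to become $\beta_\ell$. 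The parenthetical inference ``$y_i(m)\ne y_i(m_\ell)$, otherwise $b_i(m)=b_i(m_\ell)$'' fails for the same reason. So, as written, the construction of the $L-1$ error tuples per message does not go through.

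The repair is a relabeling: corrupt $z'_i$ instead, setting it to $z'_i(m_\ell)$. Because $a_i(m)=a_i(m_\ell)=\hat a_i$ pins down $x_i$ and $y_i$, node $B_i$ then receives $(x_i(m_\ell),y_i(m_\ell),z'_i(m_\ell))$, i.e.\ exactly its inputs when $m_\ell$ is sent error-free, so $b_i$ becomes $\beta_\ell$ while every other branch is untouched; and if $z'_i(m)=z'_i(m_\ell)$ then $B_i$'s inputs under $m$ and $m_\ell$ with no error coincide, contradicting $\beta_{\ell_0}\ne\beta_\ell$, so the corruption is a genuine single-edge error and $\{z'_i\}\in\mathcal{A}$. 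With this substitution your three verification points and the counting are correct and the argument coincides with the paper's proof of Lemma~\ref{three} (the paper phrases it as producing $L$ tuples per message, one of which is $\bm b(m)$ itself and is discarded).
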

\begin{proof}
Assume for concreteness that $\{b_i(m):m \in \mathcal{M}(\hat{a}_i)\} = \{ \hat{b}^{(1)}_i , ..., \hat{b}_i^{(L)} \}$, then there exist $L$ messages $m_1, ..., m_L \in \mathcal{M}(\hat{a}_i)$ such that $b_i(m_j) = \hat{b}_i^{(j)}$, $j=1,...,L$. For $j =1,...,L$, let $\bm{r}_j$ be the error pattern that changes the signal on $z'_i$ to be $z'_i(m_j)$. Then if a message $m_0 \in \mathcal{M}(\hat{a}_i)$ is transmitted by the source and $\bm{r}_j$ happens, the node $B_i$ will receive the same inputs $(x_i(m_0), y_i(m_0), z'_i(m_j))$ as in the situation that $m_j$ is sent and no error happens. Therefore $b_i(m_0, \bm{r}_j) = \hat{b}_i^{(j)}$, and so $|\{ \bm{b}(m_0, \bm{r}_j) \}_{j\in [L]}| = | \{ b_i(m_0,\bm{r}_j) \}_{j \in [L]} | = L$. Since $m_0 \in \mathcal{M}^\text{good}$, it is correctly decodable under any error pattern $\bm{r} \in \mathcal{R}_{\mathcal{A}}$, and so all elements of $\{ \bm{b}(m_0, \bm{r}_j) \}_{j\in [L]}$ will be decoded by the terminal to $m_0$. Except the element $\bm{b}(m_0)$, the other $L-1$ elements of $\{ \bm{b}(m_0, \bm{r}_j) \}_{j\in [L]}$ are elements of $\mathcal{B}^\text{err}$. Sum over all $ m_0 \in \mathcal{M}(\hat{a}_i)$ and the assertion is proved.
\end{proof}

\begin{proof}[Proof (of Lemma \ref{pierr})]
Define $\mathcal{A}_{i,1}^\pi = \{ \hat{a}_i \in [2^n] : |\mathcal{M}(\hat{a}_i)| \le \frac{1}{2} 2^{(k-1)n} \} $, and $\mathcal{A}_{i,2}^\pi =\{ \hat{a}_i \in [2^n] \backslash \mathcal{A}_{i,1}^\pi   : |\{b_i(m):m \in \mathcal{M}(\hat{a}_i)\}| > 1   \}$. Then define $\mathcal{M}_{i,1}^\pi = \{ m \in \mathcal{M}^\text{good} : a_i(m) \in \mathcal{A}_{i,1}^\pi   \}$, and $\mathcal{M}_{i,2}^\pi =  \{ m \in \mathcal{M}^\text{good} : a_i(m) \in  \mathcal{A}_{i,2}^\pi   \}$. Notice that by construction $\mathcal{A}_{i,1}^\pi$ and $\mathcal{A}_{i,2}^\pi$ are disjoint, and $\mathcal{M}_{i,1}^\pi $ and $\mathcal{M}_{i,2}^\pi $ are disjoint. We claim that,
\begin{align}
\mathcal{M}^\pi_i \subset \mathcal{M}_{i,1}^\pi \cup \mathcal{M}_{i,2}^\pi.\label{mpi}
\end{align}
To prove the claim, consider any $m \in \mathcal{M}^\text{good}$ such that $m \notin \mathcal{M}_{i,1}^\pi \cup \mathcal{M}_{i,2}^\pi$. We will show that $\pi( b_i(m) ) = a_i(m)$. Suppose for the sake of contradiction that $\pi( b_i(m) ) = \hat{a}_i \ne a_i(m)$, 
then it follows that
\begin{align}
| \mathcal{M}(\hat{a}_i, b_i(m)) | \stackrel{(a)}{>} |\mathcal{M}(a_i(m), b_i(m))|& \nonumber \\  & \hspace{-18mm} \stackrel{(b)}{=} |\mathcal{M}(a_i(m))|  \stackrel{(c)}{>} \frac{1}{2} 2^{(k-1)n},
\end{align}
where (a) is due to the definition of $\pi$, (b) is due to the fact that $m \notin \mathcal{M}_{i,2}^\pi$ and (c) is due to the fact that $m \notin \mathcal{M}_{i,1}^\pi$. Let $\mathcal{M}(\hat{b}_i) = \{ m' \in \mathcal{M}^\text{good} : b_i(m') = \hat{b}_i \}$, then $ \mathcal{M}(\hat{a}_i, b_i(m)) \cup \mathcal{M}(a_i(m)) \subset \mathcal{M}(b_i(m))$. Since $\hat{a}_i \ne a_i(m)$, $ \mathcal{M}(\hat{a}_i, b_i(m))$ and $\mathcal{M}(a_i(m))$ are disjoint, and it follows that $| \mathcal{M}(b_i(m))| \ge |\mathcal{M}(\hat{a}_i, b_i(m))| + |\mathcal{M}(a_i(m))| > 2^{(k-1)n}$. However, because $|\{ (\hat{b}_1, ..., \hat{b}_k ) \in [2^n]^k  : \hat{b}_i = b_i(m) \}| = 2^{(k-1)n}$, by the pigeonhole principle there must exist two messages $m_1, m_2 \in \mathcal{M}(b_i(m))$ such that $\bm{b}(m_1) = \bm{b}(m_2)$. This is a contradiction since the terminal cannot distinguish $m_1$ from $m_2$. This proves $\pi( b_i(m) ) = a_i(m)$ as well as (\ref{mpi}).

We next bound the size of $\mathcal{M}_{i,1}^\pi$ and $\mathcal{M}_{i,2}^\pi$. 
For any $\hat{a}'_i \in \mathcal{A}_{i,1}^\pi$, by defnition  $\{  (\hat{a}_1 ,..., \hat{a}_k) \in [2^n]^k : \hat{a}_i = \hat{a}'_i   \} \backslash \{ (\bm{a}(m) : m \in \mathcal{M}(\hat{a}'_i)  \}$ is a subset of $\mathcal{A}^\text{err}$ with size at least $\frac{1}{2} 2^{(k-1)n}$. Therefore each element of $\mathcal{A}_{i,1}^\pi$ will contribute to at least $\frac{1}{2} 2^{(k-1)n}$ distinct elements of $\mathcal{A}^\text{err}$. Hence $|\mathcal{A}_{i,1}^\pi| \cdot \frac{1}{2} 2^{(k-1)n} \le |\mathcal{A}^\text{err}| = \epsilon \cdot 2^{kn}$, and so $|\mathcal{A}_{i,1}^\pi| \le 2\epsilon \cdot 2^n$. It then follows that $|\mathcal{M}_{i,1}^\pi| \le \frac{1}{2} 2^{(k-1)n} |\mathcal{A}_{i,1}^\pi| = \epsilon \cdot 2^{kn}$.

By Lemma \ref{three}, each elements of $\mathcal{A}_{i,2}^\pi$ will contribute to at least $ \frac{1}{2} 2^{(k-1)n} $ distinct elements in $\mathcal{B}^\text{err}$. Therefore $|\mathcal{A}_{i,2}^\pi| \cdot \frac{1}{2} 2^{(k-1)n} \le |\mathcal{B}^\text{err}| = \epsilon \cdot 2^{kn}$, and so $|\mathcal{A}_{i,2}^\pi| \le 2\epsilon \cdot 2^n$. It then follows that $|\mathcal{M}_{i,2}^\pi| \le 2^{(k-1)n} |\mathcal{A}_{i,2}^\pi| = 2\epsilon \cdot 2^{kn}$. Finally, by (\ref{mpi}) we have $|\mathcal{M}^\pi_i| \le |\mathcal{M}^\pi_{i,1}| + |\mathcal{M}^\pi_{i,2}| \le 3 \epsilon \cdot 2^{kn}$.
\end{proof}

We are now ready to prove Theorem \ref{epsilonerrk2}.
\begin{proof}[Proof (``$\Rightarrow$'' part of Theorem \ref{epsilonerrk2})]
We show the feasibility of rate $k$ in $\mathcal{I}_c$ implies the feasibility of unit rate in $\mathcal{I}$.

Let $\{\phi_e, \phi_t \}_{e \in \mathcal{E}}$ be the network error correction code of length $n$ that achieves rate $k$ in $\mathcal{I}_c$, with probability of error $\epsilon$. We assume that in this code edge $z_i$ simply relays the signal from edge $a_i$. This is without loss of generality because for any network code that needs to process the signal on edge $a_i$ to obtain the signal to be transmitted on edge $z_i$, it is equivalent to relay the signal on edge $z_i$ and perform the processing work at the head node of edge $z_i$.

Let $\mathcal{E}_\mathcal{N} \subset \mathcal{E}$ be the set of edges of the embedded graph $\mathcal{N}$. For the multiple-unicast problem $\mathcal{I}$, we define a length-$n$ network code $\{ \tau_e, \tau_{t_i} : e \in \mathcal{E}_\mathcal{N}, i \in [k] \}$ as follows.
\begin{align*}
\tau_e & = \phi_e, \ \ \ \ \forall e \in \mathcal{E}_{\mathcal{N}}\\
\tau_{t_i} & = \phi_{z_i} \circ \pi_i \circ  \psi_i \circ  \phi_{z'_i}, \ \ \ \ \forall i = 1,...,k.
\end{align*}
where $\circ$ denotes function composition; $\phi_{z_i}$ and  $\phi_{z'_i}$ are the encoding functions of edges $z_i$ and $z'_i$ in problem $\mathcal{I}_c$; $\psi_i$ is defined in $(\ref{psii})$; and $\pi_i$ is defined in (\ref{pii}). In the following we show that $\{ \tau_e, \tau_{t_i} : e \in \mathcal{E}_\mathcal{N}, i \in [k] \}$ achieves unit rate in $\mathcal{I}$ with probability of error upper bounded by $6k \epsilon$.

In problem $\mathcal{I}$, let $M_i$ be the random message associated with source $s_i$, then $M_i$, $i=1,...,k$ are i.i.d. uniformly distributed over $[2^n]$. Denote for short $\bm{M} = (M_1, ..., M_k)$, then slightly abusing  notations we denote by $\tau_{t_i}(\bm{M}) $ the output of the decoder $\tau_{t_i}$ under transmission $\bm{M}$.
The probability of decoding error is given by
\begin{align*}
 \Pr  \{ \bigcup_{i=1}^k \tau_{t_i}(\bm{M}) \ne M_i  \},
\end{align*}
where the probability is taken over the joint distribution of the random messages. Let $\bm{m} = (m_1, ..., m_k)$ be the realization of $\bm{M}$. We claim that if there exists a message $m$ of problem $\mathcal{I}_c$ (not to be confused with $\bm{m}$, a message of $\mathcal{I}$) such that $m \in \mathcal{M}^\text{good}$, $m \notin \mathcal{M}_i^{\psi}$, $m \notin \mathcal{M}_i^\pi$ and $\bm{m} = \bm{z}(m)$, then $\tau_{t_i}(\bm{m}) = m_i$. To prove the claim, suppose $\bm{m} = \bm{z}(m)$ is transmitted in $\mathcal{I}$. Notice that all edges in $\mathcal{N}$ perform the same coding scheme in $\mathcal{I}$ as in $\mathcal{I}_c$, therefore for terminal node $t_i$, by invoking the function $\phi_{z'_i}$, it obtains $z'_i(m)$. Then by the definition of $\mathcal{M}_i^{\psi}$, it follows that $\psi_i(z'_i(m)) = b_i(m)$. And by the definition of $\mathcal{M}_i^\pi$, it follows that $\pi(\psi_i({z'_i}(m))) = a_i(m)$. Finally since $\bm{m} = \bm{z}(m)$, it follows that $\phi_{z_i}( \pi(\psi_i({z'_i}(m)))) = \phi_{z_i}( a_i(m) ) = z_i(m) = m_i $.

Therefore $\tau_{t_i}(\bm{m}) = m_i$ if $\bm{m} \in \{   \bm{z}(m) \in [2^n]^k:  m \in \mathcal{M}^\text{good}, m \notin \mathcal{M}^\psi_i, m \notin \mathcal{M}^\pi_i \}$.  The probability that $\tau_{t_i}$ makes an error, i.e., $ \Pr  \{ \tau_{t_i}(\bm{M}) \ne M_i  \}$, is upper bounded by the probability of the union of the following three events.
\begin{align*}
E_1 &= \{ \bm{M} = \bm{m}: \bm{m} \notin \{ \bm{{z}}(m) \in [2^n]^k :  m \in \mathcal{M}^\text{good} \} \}\\
E_2 &= \{ \bm{M} = \bm{m}: \bm{m} \in \{ \bm{{z}}(m) \in [2^n]^k : m \in \mathcal{M}^\psi_i  \} \}\\
E_3 &= \{ \bm{M} = \bm{m}: \bm{m} \in \{ \bm{{z}}(m) \in [2^n]^k : m \in \mathcal{M}^\pi_i  \}\}.
\end{align*}

We upper bound the probability of $E_1, E_2, E_3$, respectively.
\begin{align}
\Pr \{ E_1\}  &= 1- \frac{ | \{ \bm{z}(m): m \in \mathcal{M}^\text{good}  \} | }{2^{kn}}\nonumber\\
&\stackrel{(d)}{=} 1 - \frac{|\mathcal{M}^\text{good}|}{2^{kn}} \le 1 -\frac{(1-\epsilon) \cdot 2^{kn}}{2^{kn}}=\epsilon,\label{e1}
\end{align}
where (d) follows from the fact that $ \bm{z}(m) = \bm{a}(m) \ne \bm{a}(m') = \bm{z}(m')$ for any $m,m' \in \mathcal{M}^\text{good}$, $m \ne m'$.
By Lemma \ref{psierr},
\begin{align}\label{e2}
\Pr \{E_2\} = \frac{|\mathcal{M}^\psi_i|}{2^{kn}} \le 2\epsilon. 
\end{align}
And by Lemma \ref{pierr}, we have
\begin{align}\label{e3}
\Pr \{E_3\} = \frac{|\mathcal{M}^\pi_i|}{2^{kn}} \le 3\epsilon. 
\end{align}

Combining (\ref{e1}), (\ref{e2}) and (\ref{e3}), it follows that
\begin{align*}
 \Pr  \{ \tau_{t_i}(\bm{M}) \ne M_i  \} & \le \Pr \{ E_1\} + \Pr \{ E_2\} +\Pr \{ E_3\} \le 6\epsilon.
\end{align*}
Finally, by taking the union bound over the $k$ terminals,
\begin{align*}
 \Pr   \{ \bigcup_{i=1}^k \tau_{t_i}(\bm{M}) \ne M_i  \} & \le 6 k \epsilon.
\end{align*}
Hence the probability of error is arbitrarily small and this establishes the feasibility of unit rate in $\mathcal{I}$. 
\end{proof}

The proof above suggests that the feasibility of rate $k$ with error probability $\epsilon$ in $\mathcal{I}_c$ implies the feasibility of unit rate with error probability $6k\epsilon$ in $\mathcal{I}$.
By setting $\epsilon=0$, we generalize the result in \cite{Huang:2014vg} regarding the zero-error model as a special case.
\begin{cor}\label{corzerr}
Given any multiple-unicast network coding problem $\mathcal{I}$ with source-destination pairs $\{ (s_i, t_i), i=1,...,k \}$, a corresponding single-source single-terminal network error correction problem $\mathcal{I}_c=(\mathcal{G},s,t,\mathcal{A})$  in which $\mathcal{A}$ includes sets with at most a single edge can be constructed as specified in Figure \ref{zeroerr},  such that rate $k$ is feasible with zero error in $\mathcal{I}_c$ if and only if unit rate is feasible with zero error in $\mathcal{I}$.
\end{cor}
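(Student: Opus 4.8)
The plan is to obtain Corollary \ref{corzerr} as the $\epsilon = 0$ instance of Theorem \ref{epsilonerrk2} and of its proof, so essentially no new work is needed. The ``if'' direction --- unit rate feasible with zero error in $\mathcal{I}$ implies rate $k$ feasible with zero error in $\mathcal{I}_c$ --- is the easy one and is already established in \cite{Huang:2014vg}: the backward construction there maps a network code for $\mathcal{I}$ to a network code for $\mathcal{I}_c$ deterministically and edge-by-edge, preserving the stronger guarantee that \emph{every} source message is decoded correctly under \emph{every} admissible single-edge error pattern; hence a zero-error unit-rate code for $\mathcal{I}$ is turned into a zero-error rate-$k$ code for $\mathcal{I}_c$.

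For the ``only if'' direction I would rerun the forward proof of Theorem \ref{epsilonerrk2} verbatim with $\epsilon$ set to $0$ (retaining the same without-loss-of-generality assumption that edge $z_i$ relays edge $a_i$). A zero-error length-$n$ code $\mathcal{C}$ achieving rate $k$ in $\mathcal{I}_c$ has $\mathcal{M}^{\text{bad}} = \emptyset$, hence $\mathcal{M}^{\text{good}} = \mathcal{M}$ and $\mathcal{B}^{\text{err}} = \mathcal{A}^{\text{err}} = \emptyset$. Lemma \ref{psierr} then forces $\mathcal{M}^\psi_i = \emptyset$ and Lemma \ref{pierr} forces $\mathcal{M}^\pi_i = \emptyset$, i.e., the deterministic map $\psi_i$ recovers $b_i(m)$ from $z'_i(m)$ and the deterministic map $\pi_i$ recovers $a_i(m)$ from $b_i(m)$ for \emph{every} $m \in \mathcal{M}$. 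Feeding these into the multiple-unicast code built in the proof --- $\tau_e = \phi_e$ on $\mathcal{E}_\mathcal{N}$ and $\tau_{t_i} = \phi_{z_i} \circ \pi_i \circ \psi_i \circ \phi_{z'_i}$ --- each terminal $t_i$ recovers $z_i(m) = m_i$ whenever $\bm{z}(m) = \bm{m}$; and since $\bm{z}(\cdot) = \bm{a}(\cdot)$ is injective on $\mathcal{M}^{\text{good}} = \mathcal{M}$, a set of size $2^{kn} = |[2^n]^k|$, it is a bijection onto $[2^n]^k$, so every realization $\bm{m}$ is of this form. Equivalently, the events $E_1, E_2, E_3$ of the proof have probabilities at most $\epsilon, 2\epsilon, 3\epsilon$, all zero at $\epsilon = 0$, so $\{\tau_e, \tau_{t_i}\}$ is a zero-error unit-rate code for $\mathcal{I}$.

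I do not anticipate a genuine obstacle: the entire content is the specialization $\epsilon \to 0$. The one point worth stating explicitly is that the bound ``error probability at most $6k\epsilon$ in $\mathcal{I}$'' derived after the proof of Theorem \ref{epsilonerrk2} is an exact statement about the fixed deterministic code $\{\tau_e, \tau_{t_i}\}$, valid for every admissible $\epsilon \ge 0$; hence at $\epsilon = 0$ it yields an honest zero-error code rather than merely a vanishing-error family. If a self-contained proof were preferred, one could instead re-prove Lemmas \ref{psierr}--\ref{three} directly in the zero-error regime, where the emptiness of $\mathcal{B}^{\text{err}}$ and $\mathcal{A}^{\text{err}}$ collapses them to the structural facts that $b_i$ is a function of $z'_i$ and $a_i$ is a function of $b_i$ on $\mathcal{M}$ --- precisely the facts underlying the zero-error reduction of \cite{Huang:2014vg}.
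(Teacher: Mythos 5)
Your proposal is correct and matches the paper's own treatment: the paper obtains Corollary \ref{corzerr} precisely by noting that the forward proof of Theorem \ref{epsilonerrk2} gives error probability $6k\epsilon$ in $\mathcal{I}$ and setting $\epsilon=0$, with the backward (zero-error) direction taken from \cite{Huang:2014vg}. Your added observations (emptiness of $\mathcal{B}^{\text{err}}$, $\mathcal{A}^{\text{err}}$, hence of $\mathcal{M}^\psi_i$, $\mathcal{M}^\pi_i$, and bijectivity of $\bm{z}(\cdot)$ on $\mathcal{M}$) are just the explicit unpacking of that specialization.
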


\section{Unachievability of Network Error Correction Capacity}
Theorem \ref{epsilonerrk2} and Corollary \ref{corzerr} suggest that there are strong ``if and only if'' connections between $\mathcal{I}_c$ and $\mathcal{I}$ for both zero-error feasibility and vanishing-error feasibility.
It is natural to ask if this connection extends to the case of asymptotic feasibility, i.e., if it is true that rate $k$ is asymptotically feasible in $\mathcal{I}_c$ if and only if unit rate is asymptotically feasible in $\mathcal{I}$. We answer this question negatively by constructing a counter-example. 
\begin{figure}[h!]
  \begin{center}
      \includegraphics[width=0.26\textwidth]{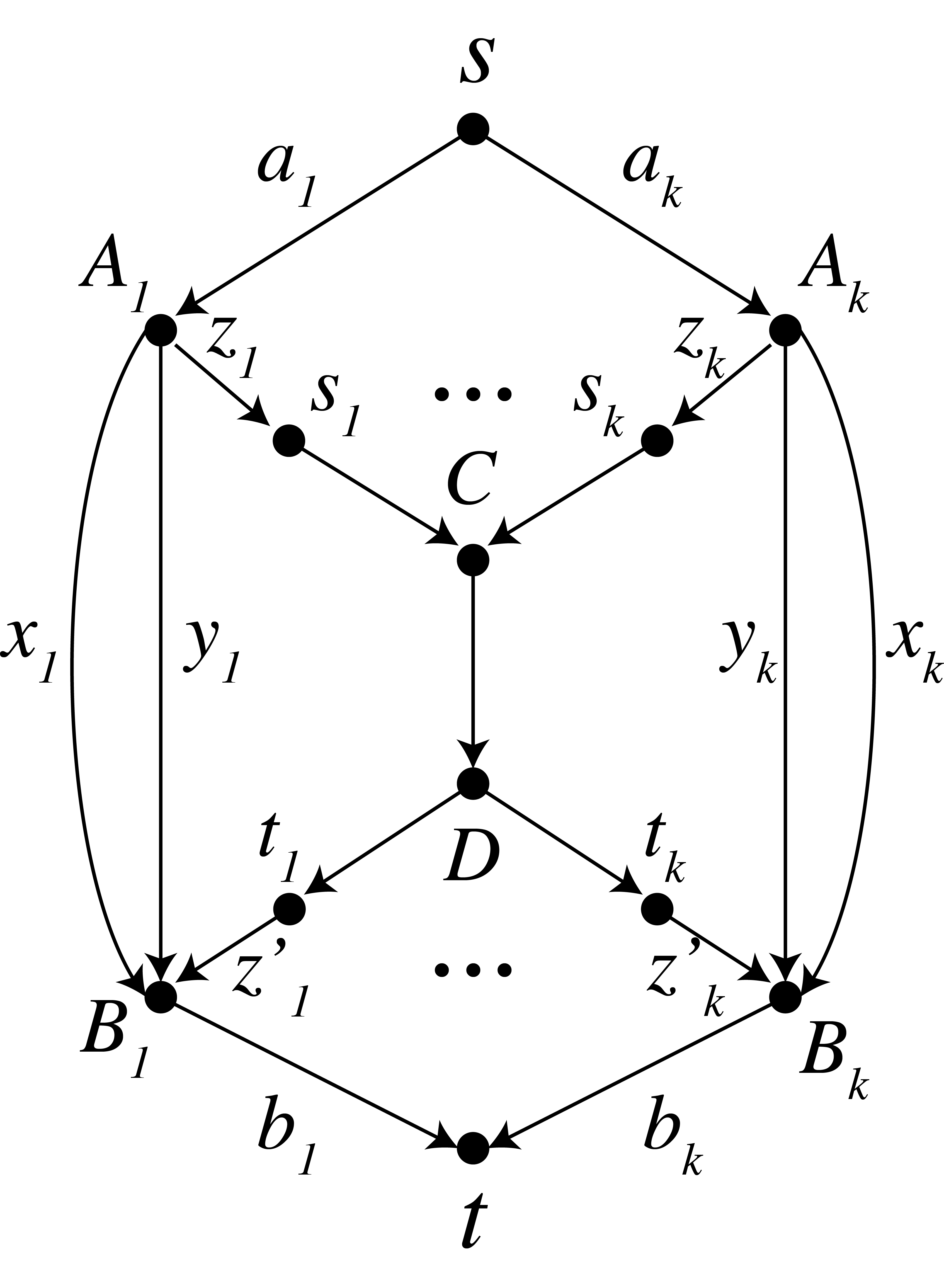}
  \caption{Construction of $\mathcal{I}_c$ and $\mathcal{I}$. In $\mathcal{I}_c$, the source is $s$ and the terminal is $t$. $\mathcal{A}$ includes all singleton sets of a single edge except $\{a_i\}$ and $\{b_i\}$, $i=1,...,k$. In $\mathcal{I}$, the  source-destination pairs are $(s_i,t_i), i=1,...,k$. All edges have unit capacity.
  }\label{nonach}
           \end{center}
\end{figure}
\begin{theorem}\label{nonachex}
There exists a multiple-unicast network coding problem $\mathcal{I}$ and a corresponding single-source single-terminal network error correction problem $\mathcal{I}_c$, constructed from $\mathcal{I}$ as specified in Figure \ref{nonach}, such that rate $k$ is asymptotically feasible in $\mathcal{I}_c$, but unit rate is not asymptotically feasible in $\mathcal{I}$.
\end{theorem}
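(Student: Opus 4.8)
The plan is to exhibit one explicit multiple-unicast instance $\mathcal{I}$, together with the error-correction instance $\mathcal{I}_c$ obtained from it by the construction of Figure~\ref{nonach}, and to establish the two halves separately: (i) unit rate is not asymptotically feasible in $\mathcal{I}$, and (ii) rate $k$ is asymptotically feasible in $\mathcal{I}_c$. For (i) it suffices to pick $\mathcal{I}$ whose symmetric (vanishing-error) capacity is some explicit $C^{*}<1$; a convenient choice is a network in which the $k$ sessions are routed so that a cut-set bound forces $kR\le c$ for a constant $c<k$, giving $C^{*}\le c/k<1$, or alternatively a known multiple-unicast network whose symmetric capacity lies strictly below the per-session min-cut. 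With such an $\mathcal{I}$, any $\delta_0$ with $1-\delta_0>C^{*}$ witnesses that rate $1-\delta_0$ is infeasible, so unit rate is not asymptotically feasible. This half is routine once $\mathcal{I}$ is fixed; all the work is in (ii).

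For (ii) I would construct, for each $\delta>0$, a length-$n$ code for $\mathcal{I}_c$ of rate $(1-\delta)k$ with vanishing error. The source message is split into a part sent along the direct relays $a_i\to x_i\to b_i$ and a $\delta kn$-bit redundancy block carried in the slack of the $a_i$'s; the $a_i$-signals are also fed into the embedded sources $s_i$ of $\mathcal{N}$, whose outputs at the $t_i$ reach the nodes $B_i$ via $y_i$ and $z'_i$. Because $\mathcal{A}$ contains only singleton sets and excludes $\{a_i\}$ and $\{b_i\}$, at most one edge is in error, so exactly one of two things happens: either the faulty edge lies on a relay path or among $\{z_i,y_i,z'_i\}$, affecting at most one branch, or it lies inside $\mathcal{N}$, in which case all direct relays are intact. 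In the latter case $t$ reads the message off $b_1,\dots,b_k$ directly; in the former case the terminal $t$ must reconstruct the single damaged branch from the intact direct copies of the other branches, the redundancy block, and what $\mathcal{N}$ conveys. The key structural fact to exploit is that even though $\mathcal{N}$ cannot deliver each $m_i$ to its own $t_i$ at rate close to $1$, it can be used to distinguish messages and to route an \emph{aggregate} linear combination of the $m_i$ (such as $\bigoplus_i m_i$) toward $t$ at a rate arbitrarily close to $1$, because the aggregate flow from the sources to the terminals in $\mathcal{N}$ far exceeds the bottlenecked per-session rate; combined with the slack redundancy this lets $t$ recover the corrupted branch for every $\delta>0$.

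The step I expect to be the main obstacle is precisely this quantitative repair claim. A naive erasure- or MDS-code on the cut $\{a_i\}$ alone only tolerates $\delta$ bounded away from $0$ (of order $1/k$), so the proof must genuinely use that the auxiliary/aggregate information can be pushed through $\mathcal{N}$ at near-unit rate notwithstanding $C^{*}<1$; this is where the concrete $\mathcal{I}$ has to be chosen so that it is bottlenecked only for the honest unicast pairings but not for the aggregate flow, and where one must verify that the reconstruction at $t$ succeeds for all $\delta>0$ with error probability tending to $0$.

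Putting (i) and (ii) together proves the theorem. It is worth noting that rate $k$ is not itself feasible in $\mathcal{I}_c$: by the quantitative form of the forward direction of Theorem~\ref{epsilonerrk2}, a rate-$k$ code in $\mathcal{I}_c$ with error $\epsilon$ yields a unit-rate code in $\mathcal{I}$ with error $6k\epsilon$, which contradicts (i) once $\epsilon$ is small. Hence the capacity $k$ of $\mathcal{I}_c$ is approached but never attained --- the unachievability statement promised in the introduction.
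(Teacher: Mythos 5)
There is a genuine gap, and it sits exactly where you flagged it: the achievability half (ii) is never actually proved, and the mechanism you would need is missing from your sketch. The paper's instance is the simplest possible one: inside $\mathcal{N}$ every source $s_i$ feeds a common node $C$, a single unit-capacity edge $(C,D)$ is shared by all $k$ sessions, and $D$ fans out to the $t_i$'s; the cut-set bound then kills any rate above $1/k$ in $\mathcal{I}$, which settles (i) immediately (your plan to ``pick some bottlenecked $\mathcal{I}$'' is the right idea but never commits to an instance, and the theorem is about a concrete construction). For (ii) the paper gives an explicit \emph{zero-error} code of rate $k-k/n$: each branch carries $M_i$ (of $n-1$ bits) on $a_i$, on \emph{both} direct edges $x_i$ and $y_i$, and into $s_i$; the bottleneck carries the single aggregate $\bigoplus_j M_j$, which reaches every $B_i$ via $z'_i$. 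The crucial point you miss is localization: node $B_i$ \emph{compares its two direct copies} $x_i$ and $y_i$, and uses one reserved bit on $b_i$ as a flag --- if they agree it forwards $M_i$ with flag $0$, otherwise it forwards the (necessarily clean) xor from $z'_i$ with flag $1$. Since at most one edge errs, at most one flag is $1$, flag-$0$ branches are always correct, and the terminal recovers the flagged branch by subtracting the other $\hat M_j$'s from the xor.

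Your sketch cannot be completed as written because you appear to read $y_i$ as an output of $\mathcal{N}$ rather than as a second direct copy, which removes the comparison at $B_i$; and without per-branch detection, a single aggregate parity at $t$ can detect but cannot \emph{locate} the corrupted branch (if $\bigoplus_i \hat M_i$ disagrees with the received xor, every index $l$ yields a consistent candidate $\hat M_l \oplus d$, so decoding fails), which is precisely the ``quantitative repair claim'' you admit you have not verified. Relatedly, the $\delta k n$-bit redundancy block and the vanishing-error framing are unnecessary detours: the slack the paper uses is a single reserved bit per edge (rate loss $k/n \to 0$), and the resulting code is zero-error for every $n$, which gives asymptotic feasibility of rate $k$ directly. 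Your closing observation that rate $k$ itself is infeasible in $\mathcal{I}_c$ (via the quantitative form of Theorem~\ref{epsilonerrk2}) is correct and matches Corollary~\ref{cor:2}, but it is not part of what Theorem~\ref{nonachex} requires.
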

\begin{proof}
The construction of $\mathcal{I}$ and the corresponding $\mathcal{I}_c$ are shown in Figure \ref{nonach}. By the cut-set bounds,  any rate larger than $1/k$ is not feasible in $\mathcal{I}$. This shows the second statement of the theorem. 

We prove the first statement of the theorem by describing a network code with length $n$ that achieves rate $k - k/n$ in $\mathcal{I}_c$. First divide the source  message of rate $k-k/n$ into $k$ pieces $M=(M_1,...,M_k)$, such that $M_i$, $i=1,...,k$ are i.i.d. uniformly distributed over $[2^{n-1}]$. We denote $\phi_e : [2^{k(n-1)}] \to [2^n]$ as the encoding function\footnote{This is called the \emph{global encoding fuction} in the context of network coding.} of edge $e$, which takes the source message $M$ as input, and outputs the signal to be transmitted on $e$ when there is no error in the network. For all $i = 1,...,k$, we let
\begin{align*}
	\phi_{a_i}(M) = \phi_{x_i}(M) = \phi_{y_i}(M) = \phi_{z_i}(M) = \phi_{(s_i,C)}(M) = M_i
\end{align*}
Furthermore, we let
\begin{align*}
\phi_{(C,D)}(M) =  \phi_{(D,t_i)}(M) = \phi_{z'_i} (M)= \sum_{j=1}^k M_j, \ \ \forall i =1,...,k
\end{align*}
where the summation is bitwise xor. Note that the edges $a_i,x_i,y_i,z_i,(s_i,C),(C,D),(D,t_i),z'_i$ each has a capacity to transmit $n$ bits. But we only require each of them to transmit $n-1$ bits. Hence each edge reserves one unused bit. 

Node $B_i$, by observing the (possibly corrupted) signals received from edges $x_i, y_i, z'_i$, performs error correction in the following way. If the signal (of $n-1$ bits) received from $x_i$ equals the signal received from $y_i$, forward the signal to edge $b_i$, and then transmit one bit of 0 using the reserved bit.
Otherwise, forward the signal received from $z'_i$ to  $b_i$, and then transmit one bit of 1 using the reserved bit. 

Finally, terminal $t$ recovers the source message in the following way. For $i=1,...,k$, such that the reserved bit on $b_i$ equals $0$, decode the remaining $n-1$ bits received from $b_i$ as $\hat{M}_i$. If $\hat{M}_1,...,\hat{M}_k$ are all obtained in this way then decoding is completed.  If two or more pieces of the $\hat{M}_i$'s are not obtained, then a decoding failure is declared. Otherwise, let $\hat{M}_l$ be the unique piece that is not obtained, then subtract $\sum_{j=1, j \ne l}^k \hat{M}_j$ from the signal received from $b_l$, and decode the result as $\hat{M}_l$.

It remains to be shown that $\hat{M}_i = M_i$, $\forall i=1,...,k$ regardless of the error patterns, and that a decoding failure will not be declared. Notice that the reserved bit on $b_i$ equals 1 only if an error occurs to either $x_i$ or $y_i$. Since there is at most a single error edge, for $i=1,...,k$ there is at most one $b_i$ with reserved bit 1. Therefore the decoder is able to obtain at least $k-1$ pieces of $\hat{M}_1, ..., \hat{M}_k$ during the first phase of decoding and will never declare failure. Next notice that if the reserved bit on $b_i$ is 0, then $\hat{M}_i \ne M_i$ only if errors occur to both $x_i$ and $y_i$. This is not possible by hypothesis and therefore $\hat{M}_i = M_i$ if the reserved bit on $b_i$ is 0. Finally, suppose the reserved bit on $b_i$ is 1, then an error must occur to either $x_i$ or $y_i$, and so $z'_i$ is not in error. Therefore $\hat{M}_i = \sum_{j=1}^k M_j  - \sum_{j=1, j \ne i}^k \hat{M}_j = M_i$. This proves the correctness of decoding and the second statement of the theorem.
\end{proof}

By applying the reduction result in Theorem \ref{epsilonerrk2} to the example constructed in Theorem \ref{nonachex}, it follows that the same example  shows the unachievability of single-unicast network error correction capacity. 
\begin{cor}\label{cor:2}
There exists a single-source single-terminal network error correction problem whose capacity is not feasible.
\end{cor}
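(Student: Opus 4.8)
The plan is to combine Theorem \ref{nonachex} with the forward (``$\Rightarrow$'') direction of Theorem \ref{epsilonerrk2}. Let $\mathcal{I}$ and $\mathcal{I}_c$ be the pair of instances constructed in Theorem \ref{nonachex}. The first step is to pin down the capacity of $\mathcal{I}_c$ exactly. Theorem \ref{nonachex} already establishes that rate $k$ is asymptotically feasible in $\mathcal{I}_c$, so the capacity is at least $k$. For the matching upper bound, I would invoke the observation used in Section \ref{sec:red} that the edges $a_1, \dots, a_k$ form a cut-set separating $s$ from $t$, and each of these edges has unit capacity; hence any length-$n$ code can convey at most $kn$ bits reliably, so no rate exceeding $k$ is even asymptotically feasible. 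Therefore the capacity of $\mathcal{I}_c$ is exactly $k$.

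The second step is to show that this capacity value $k$ is itself not feasible in $\mathcal{I}_c$. I would argue by contradiction: suppose rate $k$ were feasible in $\mathcal{I}_c$. By the ``$\Rightarrow$'' direction of Theorem \ref{epsilonerrk2}, this would imply that unit rate is feasible in $\mathcal{I}$. But feasibility of unit rate in the multiple-unicast problem trivially implies asymptotic feasibility of unit rate, since for any $\delta>0$ the rate $(1-\delta)\cdot 1 \le 1$ can be achieved by the same code (discarding part of each source message). This contradicts the second statement of Theorem \ref{nonachex}, namely that unit rate is not asymptotically feasible in $\mathcal{I}$ — indeed, the cut-set bound there forbids any rate above $1/k < 1$. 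Hence rate $k$ is not feasible in $\mathcal{I}_c$.

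Putting the two steps together, $\mathcal{I}_c$ is a single-source single-terminal network error correction problem whose capacity equals $k$, yet the rate $k$ is not feasible; i.e., the capacity is not achievable, which is precisely the assertion of the corollary. I do not anticipate a genuine obstacle here: the work has essentially been front-loaded into Theorems \ref{epsilonerrk2} and \ref{nonachex}. The only points requiring a line of care are (i) the cut-set upper bound that identifies the capacity with $k$ (so that ``rate $k$'' is literally \emph{the} capacity and not merely some non-feasible rate strictly below it), and (ii) the elementary implication that feasibility of a given rate entails asymptotic feasibility of that same rate. Both follow directly from the definitions in Section II and the structure of $\mathcal{I}_c$ depicted in Figure \ref{nonach}.
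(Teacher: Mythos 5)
Your proposal is correct and follows essentially the same route as the paper: cut-set bound plus Theorem \ref{nonachex} to pin the capacity of $\mathcal{I}_c$ at $k$, then the forward direction of Theorem \ref{epsilonerrk2} combined with the infeasibility of unit rate in $\mathcal{I}$ (which the cut-set bound of $1/k$ already gives) to conclude that rate $k$ is not feasible in $\mathcal{I}_c$. The only difference is cosmetic: you phrase the second step as a contradiction and spell out that feasibility implies asymptotic feasibility, while the paper applies the contrapositive directly.
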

\begin{proof}
The construction of the network error correction problem $\mathcal{I}_c$ is shown in Figure \ref{nonach}.  By the cut-set bounds, the capacity of $\mathcal{I}_c$ is upper bounded by $k$. By Theorem \ref{nonachex}, rate $k$ is asymptotically feasible in $\mathcal{I}_c$, and so the capacity of $\mathcal{I}_c$ is $k$. 
Also by Theorem \ref{nonachex}, unit rate in not feasible in $\mathcal{I}$, and so by Theorem \ref{epsilonerrk2}, rate $k$ is not feasible in $\mathcal{I}_c$. This shows that the capacity of $\mathcal{I}_c$ is not feasible. 
\end{proof}

Corollary \ref{cor:2} suggests that although the network error correction capacity is (by definition) asymptotically feasible,  in general it may not be exactly feasible. This is in contrast to the scenario of network error correction with uniform $\mathcal{A}$, i.e.,  $\mathcal{A}$ is the collection of all subsets containing $z$ channels. In this case the network capacity can be achieved by linear codes. Unachievability of capacity is also studied for multiple-unicast networks \cite{Dougherty:2006cg} and sum networks \cite{Rai:2012iq}. For both cases example networks for which the capacity is not achievable are constructed using matriod theory.

\section{Concluding Remarks}
We show that determining the feasibility of a rate tuple in a multiple-unicast network coding problem can be efficiently reduced to determining the feasibility of a corresponding rate in a corresponding single-unicast network error correction problem, where an adversary may jam at most a single edge. Note that though our  analysis assumes all source-destination pairs in the multiple-unicast transmit at equal rate, this restriction can be  relaxed by modeling a varying rate source $s$ as several equal rate sources co-located at $s$. Finally we apply the reduction to show the unachievability of single-unicast network error correction capacity. 

We note that our results do not imply that finding the capacity of a multiple-unicast network coding problem can be reduced to finding the capacity of a single-unicast network error correction problem.  Whether it is possible to construct such a reduction would be an interesting open problem. 

\newcommand{\BIBdecl}{\setlength{\itemsep}{ -0.15em}}
\bibliographystyle{IEEEtran}
\bibliography{ref}

\end{document}